\documentclass[twocolumn]{article}
\usepackage[letterpaper,margin=0.6in,columnsep=15pt]{geometry}
\usepackage{mathtools}
\usepackage{amsthm}
\usepackage{amssymb}
\usepackage{amsmath}
\usepackage{enumitem}
\usepackage{tabularx}
\usepackage{booktabs}
\usepackage{authblk}

\newcommand{\bits}{\{0,1\}}
\newcommand{\Setup}{\mathsf{Setup}}
\newcommand{\Init}{\mathsf{Init}}
\newcommand{\Authorize}{\mathsf{Authorize}}
\newcommand{\Judge}{\mathsf{Judge}}
\newcommand{\Can}{\mathsf{Can}}
\newcommand{\Fin}{\mathsf{Fin}}
\newcommand{\Acc}{\mathsf{Acc}}
\newcommand{\Emit}{\mathsf{Emit}}
\newcommand{\Requested}{\mathsf{Requested}}
\newcommand{\LAEUF}{\mathsf{LA\text{-}EUF}}
\newcommand{\PQEUF}{\mathsf{PQ\text{-}EUF\text{-}CMA}}
\newcommand{\LQROM}{\mathsf{LQROM}}
\newcommand{\Adv}{\mathsf{Adv}}
\newcommand{\ctx}{\mathsf{ctx}}
\newcommand{\CR}{\mathsf{CCR}}
\newcommand{\qPRFQRO}{\mathsf{qPRF\text{-}QRO}}
\newcommand{\Bad}{\mathsf{Bad}}
\newcommand{\Fresh}{\mathsf{Fresh}}
\newcommand{\Hit}{\mathsf{Hit}}
\newcommand{\Qadv}{Q_{\mathsf{adv}}}
\newcommand{\Qmed}{Q_{\mathsf{med}}}
\newcommand{\Qhon}{Q_{\mathsf{hon}}}
\newcommand{\Qeff}{Q_{\mathsf{eff}}}
\newcommand{\Qhead}{Q_{\mathsf{head}}}
\newtheorem{theorem}{Theorem}
\newtheorem{corollary}{Corollary}
\newtheorem{lemma}{Lemma}
\newtheorem{definition}{Definition}
\newtheorem{proposition}{Proposition}

\title{Finality Before Disclosure for Ledger Authenticators in the Quantum Random Oracle Model}

\author[1]{Maja Lie}
\author[1, 2]{Benjamin Marsh}
\affil[1]{Sei Labs}
\affil[2]{University of Portsmouth}
\date{August 2026}

\begin{document}

\maketitle

\begin{abstract}
Public ledgers increasingly authorize state transitions using prior transactions,
finalized state, timing, and ordering rather than only a public key, message, and
portable signature.  Standard unforgeability games omit the public pending pool,
adversarial scheduling, finality, and verification that depends on the
transcript.  We
introduce \emph{ledger authenticators} and $\LAEUF$, an unforgeability experiment
for reactive authorization protocols whose public judgment algorithm reads a
finalized transcript.  The model separates authentication safety from ledger
liveness and captures canonical transition freshness, adaptive corruption,
exposure before inclusion, censorship, and adversarial ordering.  We identify
two conditional resource boundaries.  An authenticator satisfying our single
event conditions yields a contextual one-time signature.  Within our
rebindable reveal class, safety requires computational post-disclosure
non-admissibility.  When precursor admission uses only public computation and
ledger scheduling, this condition is enforced by closing the evidence eligible
to use a disclosed credential.  If newly constructed evidence remains
admissible after disclosure, censoring the honest reveal gives a forgery.  We
then define a joint ledger and quantum random oracle execution model in which
quantum state persists across classical finalization cuts and oracle
evaluations made through the ledger are charged.  For a closed finalized target set of size at
most $K$, we prove the bound
$3\beta_{\mathsf{cut}}^2+3c_{\mathsf{co}}KQ^2/2^\lambda
+6\ell/2^\lambda$, where
$\beta_{\mathsf{cut}}$ accounts for fresh openings already present at the cut.
A commit, close, reveal authenticator instantiates the framework and obtains a
multi-user lifetime QROM bound.  For public admission rebindable
reveal authentication, eligibility closure in finalized state before
disclosure is the ledger supplied cryptographic resource.
\end{abstract}

\section{Introduction}\label{sec:introduction}

Post-quantum ledger authentication is usually framed as signature
substitution.  One replaces ECDSA or EdDSA with ML-DSA, SLH-DSA, or a stateful
hash based signature~\cite{nistfips204,nistfips205,nist800208}.  This preserves
a portable interface in which a verifier needs only a public key, a message,
and a signature. Such an interface ignores the environment in which ledger 
authentication occurs: the verifier need not know how the public key
was established, what other messages have been authorized, or whether multiple
parties have observed the same history. Ledger protocols, however, naturally
expose precisely this additional information and can trade portability for
prior transactions, finalized state, timing, and ordering. Guy Fawkes authentication
and Fawkescoin are early examples which exploit this trade~\cite{anderson1998guy,bonneau2014fawkescoin}.
Recent hash based proposals revisit this idea in metatransaction and
post-quantum settings~\cite{hughes2023hbauth,finlow2026cost}. These works suggest that a
ledger need not merely transport signatures; it can itself contribute to the
security argument by constraining when and how authentication evidence becomes
valid. This raises the paper's central question. Which part of authentication hardness
comes from a private credential, and which part comes from finalized history?
Registration and nonce checks use the ledger as a public key directory and
replay guard, but add no cryptographic hardness to a local authenticator. For a
rebindable reveal protocol, safety requires computational post-disclosure
non-admissibility. After learning the credential, no efficient scheduler may
make newly constructed evidence eligible for a different action. When
admission otherwise uses only public computation and ledger scheduling,
finalized state can enforce this condition by fixing the complete eligible set
before disclosure. An independent computational admission gate may instead
make post-disclosure admission infeasible under its own assumption. The usual signature game cannot express this distinction since it models
authentication as an isolated cryptographic primitive. It has no public
pending pool in which an honest credential becomes visible before inclusion,
no adversarial scheduler that may suppress that event and insert another, and
no distinction between inclusion and finality.  Message freshness is also too
weak when the same application call with a different successor credential
induces a different state transition.  In the quantum setting, the adversary
may retain quantum state across many authorizations and across the finalization
cut, while submitted events can cause oracle evaluations by the verifier.
Query budgets for each transaction or quantum search accounting begun only after
disclosure therefore miss part of the attack surface.

We make the verifier that reads the transcript the primary object.  A
\emph{ledger authenticator} has a reactive authorization algorithm and a
historical judgment predicate on finalized transcripts.  Its unforgeability
game exposes every honest emission before inclusion and gives scheduling and
censorship to the adversary.  Safety assumes no honest inclusion bound.
Completeness is stated separately under an explicit ledger liveness condition.
We show that mechanisms satisfying our single event conditions yield one-time
signatures. Within the rebindable reveal class, a mechanism whose admission is
otherwise public can use finalized history by closing admissible evidence
before its credential leaks.  In the QROM, that close is a classical proof cut,
not a reset.  Quantum state and preprocessing that depends on oracle values
persist across it, and evaluations by the verifier remain part of the lifetime
query budget.

\paragraph{Contributions.}
This paper makes three contributions.

\begin{enumerate}[leftmargin=1.5em]
  \item We define ledger authenticators and a multi-user $\LAEUF$
  experiment with reactive honest processes, exposure before inclusion,
  adversarial scheduling, adaptive corruption, finalized historical receipts,
  and canonical transition freshness.  Ordinary post-quantum signatures embed
  into the model.  Conversely, authentication satisfying our single event
  conditions in a fixed context yields a contextual one-time signature by a
  straight line reduction that preserves the stated resources.

  \item We isolate the closure condition for reveal protocols. Replay protection prevents duplicate acceptance of the same evidence, whereas our closure theorem prevents newly created evidence from becoming eligible after disclosure. If a disclosed
  credential can be rebound to another action, admitting newly constructed
  evidence after disclosure yields a censor and rebind forgery.  Thus safety
  requires negligible post-disclosure admission probability. When admission
  otherwise uses only public computation and scheduling, finalized eligibility
  closure supplies this property.

  \item We specify the ledger QROM and prove a finalized target search rule for
  adaptive classical cuts, persistent quantum state, public transcripts
  correlated with the oracle, and evaluations made through the ledger.  A commit, close, reveal
  authenticator applies the rule and yields a multi-user lifetime
  security bound together with parameter constraints for deployment.
\end{enumerate}

\section{Related work and positioning}\label{sec:related}

We position our contribution relative to four complementary research threads: portable signature security, temporal ledger authentication, formal ledger execution models, and quantum random oracle analysis.

\paragraph{Portable signatures.}
Classical EUF-CMA treats verification as a local predicate of a public key,
message, and signature~\cite{gmr1988signatures}.  Boneh and Zhandry study
signatures against quantum adversaries~\cite{bonehzhandry2013signatures}.
XMSS/WOTS+ and LMS/LM-OTS are standardized stateful hash based
signatures~\cite{hulsing2018xmss,mcgrew2019lms,nist800208}, while FIPS~205
standardizes stateless SLH-DSA~\cite{nistfips205}.  These objects remain
portable.  A ledger nonce can reject a second accepted signature at a reused
one-time index, but cannot prevent a wallet from exposing two pending
signatures at that index. Within the conditions of
Definition~\ref{def:local}, our single event extraction result marks the
boundary.
Local ledger transport and replay guards do not make public authentication
free.

\paragraph{Temporal authentication and ledger reveal protocols.}
Anderson et al.~\cite{anderson1998guy} authenticate a stream by placing a hash
commitment on a trusted timeline before revealing its preimage.  Fawkescoin
uses a blockchain for the same temporal role and already identifies the
censorship race after disclosure~\cite{bonneau2014fawkescoin}. These works establish the basic idea
that authentication can derive security not only from a secret credential but
also from an immutable ordering of events. Subsequent work applies similar timing mechanisms in different settings. Stewart et
al.~\cite{stewart2018slow} and Sattath and Wyborski~\cite{sattath2023lifted}
use related timing mechanisms for one-time migration or recovery after a
quantum break.  HBAuth studies inexpensive hash-based authorization for
batched metatransactions~\cite{hughes2023hbauth}.  Finlow-Bates et
al.~\cite{finlow2026cost} propose a compact two transaction hash construction
as an alternative to large post-quantum signatures and leave formal security
analysis to future work. Akshat~\cite{akshat2026commitreveal} studies EVM
commit-reveal authorization, including binding completeness, a quantum
front-running game, implementation, and gas costs. Its analysis concerns a
probabilistic post-finality race. We instead study transcript unforgeability
under indefinite censorship and persistent state lifetime QROM bounds. We do
not claim the first temporal remedy. Rather, we formalize
an unforgeability game for the full execution, a necessary closure
theorem for rebindable reveals, and a lifetime QROM analysis.  The resulting
condition is stronger than confirmation of one commitment because all evidence
that may use the disclosed credential must already be fixed by finality.

\paragraph{Ledger execution models.}
Universal composability, blockchain backbones, and composable ledger
abstractions model distributed execution and consensus~\cite{canetti2001uc,
garay2015backbone,badertscher2017ledger}.  LedgerLocks gives a framework for
protocols based on locked adaptor signature transactions~\cite{tairi2023ledgerlocks}.
Kaptchuk, Green, and Miers use a ledger to authenticate publication to
otherwise stateless computation~\cite{kaptchuk2019ledgers}.  $\LAEUF$ is
narrower than these frameworks.  It assumes a ledger interface and gives a
game for recurring account authorization whose public verifier
reads finalized history.

\paragraph{Quantum random oracles.}
Boneh et al.~\cite{boneh2011qrom} formulate random oracle security against
quantum queries, and Bennett et al.~\cite{bennett1997strengths} give the
quadratic search law behind our concrete bounds.  Compressed oracle and query
recording techniques make oracle knowledge usable in reductions without
recording every superposition query classically~\cite{zhandry2019record,
chung2021compressed}.  Our finalized target lemma is a proof
rule derived from this machinery.  It identifies the classical cut, the
immutable target set, the bad amplitude at the cut, and every evaluation made
through the ledger that must be charged.

\section{Ledger authenticators}\label{sec:model}

\subsection{Finalized ledger execution}

Time is divided into slots.  The ledger maintains a full transcript $\mathcal T$
and exposes a finalized prefix
$\mathcal T_t^{\mathsf{fin}}=\Fin_t(\mathcal T)$ at slot $t$.  A finalized
entry contains the event bytes, inclusion and finalization slots, the relevant
states before and after execution, and a receipt.  Except on an event
$\Bad_{\mathsf{fin}}$ of probability at most
$\varepsilon_{\mathsf{fin}}$, finalized prefixes are monotone such 
\(
 t\le t'\) implies \(
 \mathcal T_t^{\mathsf{fin}}
 \preceq \mathcal T_{t'}^{\mathsf{fin}}.
\) This abstraction covers deterministic and probabilistic finality by placing
the desired failure probability in $\varepsilon_{\mathsf{fin}}$.  An event has
three relevant moments.  It is first \emph{emitted} into the pending pool and
immediately delivered to the adversary.  It may later be \emph{included} in a
candidate history.  It may eventually enter the \emph{finalized} prefix.  The
adversary chooses which pending events are included and their order, subject to
the public state transition relation.  It may censor any honest event without
a time bound.  Consequently, safety cannot be proved by assuming that the
honest pending event wins a race.  For completeness only, we sometimes assume
$(\Delta_{\mathsf{inc}},\Delta_{\mathsf{fin}})$ liveness, under which an
admissible honest event emitted at $t$ is included by
$t+\Delta_{\mathsf{inc}}$, and an included honest event finalizes within
another $\Delta_{\mathsf{fin}}$ slots.  No such bound is available to an
unforgeability reduction, since an adversary may delay honest events arbitrarily.

All definitions and experiments are relative to a fixed public classical
polynomial-time ledger transition machine
\[
 \mathcal L=(\mathsf{Valid},\mathsf{Apply},\mathsf{Fin}).
\]
It fixes event parsing, state guards, deterministic application on a legal
candidate history, receipt creation, and finality. The adversary controls
submission, inclusion, ordering, and calls to finalization, but cannot change
these algorithms. The QROM environment $\mathcal E$ introduced later is the
interactive wrapper that runs $\mathcal L$, the honest algorithms, and the
common oracle. We write
$\Adv_{\mathsf{LA},\mathcal L}^{\LAEUF}$ when the transition machine matters
and suppress $\mathcal L$ when it is fixed by context.

\subsection{Canonical actions and historical receipts}

Let $a$ be an account, $e$ a key epoch, $i$ an authorization nonce, $b$ an
application request, and $\rho$ all authorization state installed by the
transition. Let
\[
 q=(\mathsf{chainId},\mathsf{forkId},a,e,i,b,\rho).
\]
A deterministic canonicalizer with length delimiters produces
\[
 \tau=\Can(q).
\]
The component $\rho$ includes a successor public key or hash chain head when
one is installed.  It also includes any expiry, fee, call root, or delegation
field that changes what the ledger will do.  Otherwise an adversary could keep
$b$ fixed, replace the successor credential, and hijack the account without
creating a fresh action in the security game. We require \emph{effect
injectivity}: if $\Can(q)=\Can(q')$, then, for every common reachable prestate
$S$, the two descriptions have the same guards and, when accepted, induce the
same transition under $\mathcal L$. Different prestates may produce different
poststates. An exact canonical encoding has
$\varepsilon_{\mathsf{can}}=0$.  An implementation that hashes an intent first
must add the appropriate collision or ambiguity failure to
$\varepsilon_{\mathsf{can}}$.

An accepted event creates a provisional receipt at its inclusion slot
$t_{\mathsf{inc}}$ and canonical position $p_{\mathsf{inc}}$. If the entry
later finalizes at slot $t_{\mathsf{fin}}$ and canonical position
$p_{\mathsf{fin}}$, the finalized transcript contains the historical receipt
\[
 \Acc(a,e,i,\tau,S,S';
      t_{\mathsf{inc}},p_{\mathsf{inc}},
      t_{\mathsf{fin}},p_{\mathsf{fin}}).
\]
The receipt states that the nonce, epoch, fork, expiry,
authorization, and application guards held atomically when the transition
executed.  Later use of the nonce does not invalidate the earlier fact.
We omit timing, position, and state fields when they are immaterial.

\begin{definition}[Ledger authenticator]\label{def:la}
A ledger authenticator relative to a fixed transition machine $\mathcal L$ is
a tuple
\[
 \mathsf{LA}=(\Setup,\Init,\Authorize,\Judge)
\]
with the following algorithms and protocol.
\begin{enumerate}[leftmargin=1.6em]
  \item $\Setup(1^\kappa)$ returns public parameters $\mathsf{pp}$.
  \item $\Init(\mathsf{pp},a)$ returns private state $\mathsf{st}_a$ and a
  public registration event.
  \item $\Authorize(\mathsf{st}_a,b)$ is a reactive classical protocol.  It may emit finitely many ledger events, read finalized prefixes and receipts between emissions, update private state, and eventually fix a canonical action $\tau$.
  \item
  $\Judge(\mathsf{pp},\mathcal T^{\mathsf{fin}},a,\tau)$ is a deterministic
  polynomial time public predicate. Every true judgment is established by a
  finalized receipt for an accepted transition whose canonical action is
  $\tau$, and the earliest such receipt in canonical transcript order is its
  canonical witness.
\end{enumerate}
The predicate has historical semantics, for good finalized prefixes,
\[
 \begin{aligned}
  \Judge(\mathsf{pp},\mathcal T,a,\tau)=1
  \ \wedge\ \mathcal T\preceq\mathcal T'
  \quad\Longrightarrow\quad{}\\
  \Judge(\mathsf{pp},\mathcal T',a,\tau)=1.
 \end{aligned}
\]
\end{definition}

The private state requirement does not make the ledger trusted with a signing secret.  It lets the honest algorithm remember seeds, unused indices, pending processes, and receipts.  In contrast, $\Judge$ uses only public parameters and the finalized transcript.

\subsection{Transcript unforgeability}

Existing unforgeability notions treat authentication as a local predicate over a public key, message, and signature. Ledger authorization instead depends on an evolving public history. We therefore model authentication as a game over the execution of the ledger rather than over isolated signing queries.

The $\LAEUF$ experiment is multi-user. The challenger maintains honest
accounts $\mathsf{Hon}$, private states, reactive authorization processes,
sets $\Requested[a]$, corruption steps $\nu_{\mathsf{cor}}[a]$, and the ledger
transcript. The experiment assigns each interface call, state transition, and
receipt finalization a unique increasing step $\nu$. For every provisional
receipt $R$, let $\nu_{\mathsf{acc}}(R)$ be the unique step at which
$\mathsf{Apply}$ accepts its transition. If $R$ later finalizes, let
$\nu_{\mathsf{fin}}(R)$ be its finalization step. These step values are
analysis metadata maintained by the challenger and are not serialized receipt
fields. A batch finalization assigns its receipts consecutive metadata steps
in canonical transcript order. Let $\mathcal T_\nu^{\mathsf{fin}}$ and
$\Requested_\nu[a]$ denote the finalized prefix and requested set after step
$\nu$. Slots remain the units for deadlines and liveness. The adversary is
quantum polynomial time, but all ledger interfaces are classical.

\begin{definition}[$\LAEUF$ experiment]\label{def:laeuf}
The experiment provides the following interfaces.
Authorization and corruption calls naming an uninitialized account return
$\bot$ without changing any game state.

\paragraph{Honest initialization.}
On $\mathcal O_{\mathsf{init}}(a)$, the challenger rejects a duplicate account
identifier, runs $\Init$, creates an honest account, and installs its
registration in a finalized prefix. The game therefore begins after successful
registration for that account. It returns the public registration record, adds
$a$ to $\mathsf{Hon}$, and sets $\nu_{\mathsf{cor}}[a]=\infty$.

\paragraph{Authorization.}
On $\mathcal O_{\mathsf{auth}}(a,b)$, the challenger starts an honest
$\Authorize(\mathsf{st}_a,b)$ process.  Once the process has selected every
authorization state output $\rho$, it fixes $\tau=\Can(\cdots,b,\rho)$ and adds
$\tau$ to $\Requested[a]$ before exposing the first event for that request.
Every emitted event is delivered to the adversary before possible inclusion.
Concurrent requests are permitted to the extent allowed by the scheme's honest
state machine.  Serialization is a scheme property, not a game assumption.

\paragraph{Corruption.}
On the first call $\mathcal O_{\mathsf{cor}}(a)$ at step $\nu$, the challenger
sets $\nu_{\mathsf{cor}}[a]=\nu$, returns the current private state including
pending process state, and terminates future honest computation for that
account. Already emitted events remain in the pending pool and may still be
scheduled. Later authorization and corruption calls for the account return
$\bot$ without changing $\Requested[a]$.

\paragraph{Scheduling.}
The adversary emits its own events and chooses inclusion and ordering of all pending events.  It may advance the ledger and trigger finalization subject to the ledger transition relation and finality experiment.  It may censor honest events indefinitely.

\paragraph{Winning condition.}
For each pair $(a,\tau)$, define
\[
 \nu_{\mathsf{fin}}^\star(a,\tau)=
 \min\{\nu:
 \Judge(\mathsf{pp},\mathcal T_\nu^{\mathsf{fin}},a,\tau)=1\},
\]
with $\nu_{\mathsf{fin}}^\star(a,\tau)=\infty$ if the set is empty. When it is
finite, let $R^\star(a,\tau)$ be the canonical receipt whose finalization
establishes that first true judgment. The pair is an eligible forgery target if
$\nu_{\mathsf{fin}}^\star(a,\tau)<\infty$ and, writing
$\nu_{\mathsf{acc}}^\star=\nu_{\mathsf{acc}}(R^\star(a,\tau))$,
\[
 a\in\mathsf{Hon},\qquad
 \nu_{\mathsf{cor}}[a]>\nu_{\mathsf{acc}}^\star,\qquad
 \tau\notin\Requested_{\nu_{\mathsf{acc}}^\star}[a].
\]
If any eligible target exists, select the lexicographically first triple
$\bigl(\nu_{\mathsf{fin}}^\star(a,\tau),a,\tau\bigr)$. The adversary wins
exactly when such a triple exists. Its probability is
$\Adv_{\mathsf{LA},\mathcal L}^{\LAEUF}(\mathcal A)$.
\end{definition}

This is canonical action freshness, not encoding freshness.  Encoding a
requested action differently is not a forgery.  Conversely, changing the
successor key or another state effect creates a fresh action even if the
application call is unchanged. We use the first accepting execution step to
test freshness and corruption, and finalization only to decide whether the
historical receipt enters the winning set. Thus a request or corruption after
acceptance does not retroactively change that receipt's eligibility. The
core syntax models a request that fixes one complete action before its first
emission that depends on the action.  A protocol that intentionally authorizes
a set, or chooses the final call after a setup event independent of the action,
can replace $\Requested[a]$ by a public requested action relation.  Its
freshness theorem must then state that relation explicitly.  Treating an
underspecified intent as one message would recreate the canonicalization
problem.

\begin{definition}[Safety despite censorship]\label{def:cis}
A ledger authenticator has safety despite censorship if its $\LAEUF$ advantage
is negligible without an honest inclusion assumption.  The notion may rely on
finality safety and computational assumptions.
\end{definition}

The adjective does not promise information theoretic safety or progress.  It means that indefinite suppression can halt the account but cannot redirect an authorization to an unrequested effect.

\begin{definition}[Conditional completeness]\label{def:complete}
An authenticator is $(k,L)$ complete under $(\Delta_{\mathsf{inc}},\Delta_{\mathsf{fin}})$ if every admissible honest request emits at most $k$ events and produces a finalized accepting receipt within $L$ slots, except with probability $\varepsilon_{\mathsf{live}}+\varepsilon_{\mathsf{fin}}$. Any request arrival window or expiry precondition must be stated explicitly.
\end{definition}

\subsection{Conservative embedding of signatures}

When the ledger provides only transport and replay protection, this definition coincides with the standard authentication notion.  Let $\Sigma=(\mathsf{Kg},\mathsf{Sign},\mathsf{Vf})$ be a signature scheme secure in the usual multi-user post-quantum EUF-CMA experiment.  Construct $\mathsf{LA}_\Sigma$ by registering $\mathsf{vk}_a$, signing the complete canonical action, and accepting an event $(a,\tau,\sigma)$ exactly when $\mathsf{Vf}(\mathsf{vk}_a,\tau,\sigma)=1$ and all state guards hold.

\begin{theorem}[Signature embedding]\label{thm:sig-embed}
For every $\LAEUF$ adversary $\mathcal A$ there is a multi-user $\PQEUF$ adversary $\mathcal B$ such that
\[
 \Adv_{\mathsf{LA}_\Sigma}^{\LAEUF}(\mathcal A)
 \le
 \Adv_{\Sigma}^{\mathsf{mu\text{-}\PQEUF}}(\mathcal B)
 +\varepsilon_{\mathsf{fin}}
 +\varepsilon_{\mathsf{can}}
 +\varepsilon_{\mathsf{state}}.
\]
Here $\varepsilon_{\mathsf{state}}$ bounds any deviation from the specified
transition relation once an event is selected for execution, including invalid
acceptance or failure to create a receipt for a valid transition.  It is zero
for the ideal transition relation.
\end{theorem}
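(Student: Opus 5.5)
The plan is a straight-line simulation: $\mathcal B$ runs the entire $\LAEUF$ experiment for $\mathcal A$, with the multi-user $\PQEUF$ challenger supplying verification keys and signatures, and outputs the signature carried by the first eligible forgery target. On $\mathcal O_{\mathsf{init}}(a)$, $\mathcal B$ requests a fresh user key $\mathsf{vk}_a$ from its challenger and installs the registration in a finalized prefix. On $\mathcal O_{\mathsf{auth}}(a,b)$, it runs the honest $\Authorize$ logic. Once $\rho$ is fixed, it computes $\tau=\Can(q)$, adds $\tau$ to $\Requested[a]$, queries its signing oracle on $\tau$, and emits $(a,\tau,\sigma)$ to the pending pool. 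On $\mathcal O_{\mathsf{cor}}(a)$, it forwards a corruption query and returns the key as the private state. Pending events are already in the adversary's view, so nothing further is needed. $\mathcal B$ runs $\mathcal L$ itself, since $\mathsf{Valid}$, $\mathsf{Apply}$ and $\mathsf{Fin}$ are public classical algorithms, and it obeys $\mathcal A$'s scheduling choices. The simulation is perfect: $\mathsf{LA}_\Sigma$ only ever uses the signing key through $\mathsf{Sign}$, and the honest state machine never needs anything else.

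Next I define the bad events $\Bad_{\mathsf{fin}}$, $\Bad_{\mathsf{can}}$ and $\Bad_{\mathsf{state}}$.
\begin{enumerate}[leftmargin=1.5em]
  \item $\Bad_{\mathsf{fin}}$: finalized prefixes fail to be monotone. This has probability at most $\varepsilon_{\mathsf{fin}}$.
  \item $\Bad_{\mathsf{can}}$: two descriptions with equal canonical actions induce different guards or transitions, or an encoding ambiguity is exploited. This has probability at most $\varepsilon_{\mathsf{can}}$.
  \item $\Bad_{\mathsf{state}}$: $\mathsf{Apply}$ accepts an event without the guards, including $\mathsf{Vf}(\mathsf{vk}_a,\tau,\sigma)=1$, actually holding, or it fails to create a receipt. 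This has probability at most $\varepsilon_{\mathsf{state}}$.
\end{enumerate}
Outside these events, the receipt $R^\star(a,\tau)$ of the lexicographically first eligible triple certifies, through the historical semantics of $\Judge$, an accepted event $(a,\tau',\sigma)$ with $\Can$-image $\tau$. At the step $\nu^\star_{\mathsf{acc}}$, $\mathsf{Vf}(\mathsf{vk}_a,\tau,\sigma)=1$ held. $\mathcal B$ locates this event in its own transcript copy and outputs $(a,\tau,\sigma)$ as its forgery.

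To check that the output is a valid $\PQEUF$ win, I argue in three parts. First, $a\in\mathsf{Hon}$ and $\nu_{\mathsf{cor}}[a]>\nu^\star_{\mathsf{acc}}$. Second, $\tau\notin\Requested_{\nu^\star_{\mathsf{acc}}}[a]$. Third, $\mathcal B$ queries the signing oracle on $\tau$ only at the moment it adds $\tau$ to $\Requested[a]$, so $\tau$ had not been queried for user $a$ by step $\nu^\star_{\mathsf{acc}}$. The event was included at or before that step, and the verification used $\mathsf{vk}_a$ at that time, so the signature already existed before any later query or corruption.

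This timing issue is the main obstacle. A later request or corruption of $a$ could make the standard game judge the output illegitimate, even though the forgery's event existed earlier. I will handle it by having $\mathcal B$ check, at each acceptance step, whether the accepted event is a fresh, uncorrupted, unrequested action. If so, $\mathcal B$ halts immediately and outputs that signature before issuing any further queries. This aborts the simulation early, but the winning triple's acceptance step is exactly such a step. Halting at the first one still yields a valid forgery, whether or not that event ever finalizes.

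I also need to rule out that a forgery arises from an honestly signed encoding whose canonicalization collides with $\tau$. Effect injectivity and the fact that signatures are taken on $\tau$ itself handle this, with any residual ambiguity charged to $\varepsilon_{\mathsf{can}}$. A union bound over the three bad events then gives the stated inequality.
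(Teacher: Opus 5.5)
Your proposal is correct and follows essentially the same route as the paper. You use a straight-line simulation that forwards key generation, signing (with $\tau$ inserted into $\Requested[a]$ before exposure) and corruption to the multi-user challenger, runs the public transition machine itself, and halts at the first unrequested acceptance under an uncorrupted honest account, which on a $\LAEUF$ win occurs no later than $\nu^\star_{\mathsf{acc}}$; a containment of the win under the good ledger events, followed by a union bound over the three bad events, is exactly the paper's argument.
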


\begin{proof}
The reduction registers the verification keys from its multi-user signature experiment.  It answers an authorization request for canonical action $\tau$ by forwarding $\tau$ to the appropriate signing oracle, then exposes the resulting ledger event.  Scheduling requires no signing secret.

Condition on monotone finality, injective canonicalization, and sound state
execution. At the acceptance step of a winning receipt, its action is
unrequested and its honest verification key is uncorrupted. The event therefore
contains a valid signature on a message absent from the signing queries. The
reduction outputs at the first such transition. A union bound over the three
excluded ledger events proves the claim.
\end{proof}

The theorem is intentionally unsurprising, so a more detailed proof is deferred to Appendix \ref{app:structural-proofs}.  It is a sanity check that the new notion contains the standard portable design point without weakening its security.

\section{Structural boundaries}\label{sec:boundaries}

Theorem~\ref{thm:sig-embed} tells us that Definition~\ref{def:laeuf} admits ordinary signatures, protocols that use transcripts, and
many degenerate mechanisms.  We next isolate two properties that explain when
the ledger is doing cryptographic work, event locality and eligibility closure.

\subsection{The single event frontier}

Not every ledger authenticator fundamentally relies on ledger history. Some protocols authorize an action by emitting one event whose validity is determined entirely by the registration state and a fixed public context. We isolate this class because, despite their ledger presentation, they behave like ordinary one-time authentication schemes. The following definition formalizes the conditions under which such a protocol admits a local verifier.

A public beacon, anchor slot, or active policy may become available after key
registration.  Let $\mathcal G_{\mathsf{ctx}}$ be the public context process
that, after arbitrary preprocessing that depends on the key, produces a context
$\gamma$ and a fresh registered account snapshot $\mathcal T_0$.  The same
process, including adversarial influence on public randomness, is used in the
ledger and extracted signature experiments.

For any ledger snapshot $\mathcal T$, let $\mathsf{pub}(\mathcal T)$ denote
its public parameters, finalized transcript, and public account state, with all
honest private state omitted.

The following definition isolates this \emph{single event form}, and the subsequent theorem shows how it yields a contextual one-time signature.

\begin{definition}[Single event form for a fixed context]\label{def:local}
A ledger authenticator has single event form for a fixed context if there are
polynomial time algorithms $\Emit$ and $V_{\mathsf{loc}}$ with the following properties.
\begin{enumerate}[label=(L\arabic*),leftmargin=2.5em]
  \item \emph{Action programmability and isolated emission.}  There is an
  efficient public map $\mathsf{Req}$ such that, for every admissible canonical
  action $\tau$, the request
  $b=\mathsf{Req}(\mathsf{pub}(\mathcal T_0),\gamma,\tau)$ causes
  authorization to fix exactly $\tau$.  Starting from
  $(\mathsf{st}_a,\mathcal T_0,\gamma)$, it emits exactly
  \[
    \mathsf{ev}=\Emit(\mathsf{st}_a,\mathcal T_0,\gamma,\tau)
  \]
  before reading any extension of $\mathcal T_0$.

  \item \emph{Local completeness.}  For an honest emission,
  \[
   V_{\mathsf{loc}}(\mathsf{pp},\mathsf{pub}(\mathcal T_0),
      \gamma,a,\tau,\mathsf{ev})=1
  \]
  except with probability $\varepsilon_{\mathsf{loc}}$.

  \item \emph{Censor and embed.}  Suppose the honest event is exposed but
  censored.  For every fresh $\tau'$ and event $\mathsf{ev}'$ accepted by
  $V_{\mathsf{loc}}$, a specified uniform polynomial time scheduler
  $\mathcal S_{\mathsf{emb}}$ produces a finalized extension
  $\mathcal T'$ satisfying
  \[
    \Judge(\mathsf{pp},\mathcal T',a,\tau')=1.
  \]
  The embedding fails with probability at most
  $\varepsilon_{\mathsf{emb}}$.  Its elapsed time, sequential depth, and
  oracle use are bounded by a public overhead vector
  $\Delta_{\mathsf{emb}}$ specified by the property.

  \item \emph{Context matching.}  The public registration state and
  $\gamma$ in the signature and ledger experiments have statistical distance,
  or computational distinguishing advantage, at most
  $\varepsilon_{\mathsf{ctx}}$, as appropriate to the reduction.
\end{enumerate}
\end{definition}

Condition (L3) is the substantive locality premise.  A short light client summary does not imply it as unrelated adversarial events can change a summary without increasing its length.  Likewise, a context that expires after a beacon window cannot silently be treated as a permanent signature context.

Let $\mathfrak R$ be an online resource class, such as bounds on elapsed time,
sequential depth, and quantum random oracle queries after the signing event is
revealed.  A \emph{contextual one-time signature} allows arbitrary public key
preprocessing, then generates $\gamma$, answers one signing query, and requires
a forgery on a different canonical action within $\mathfrak R$.

\begin{theorem}[Single event extraction]\label{thm:extract}
Let $\mathsf{LA}$ have single event form for a fixed context and safety despite
censorship.  Define key generation by $\Init$, signing by $\Emit$, and
verification by $V_{\mathsf{loc}}$.  For every contextual one-time forger
$\mathcal F$, there is a straight line ledger adversary
$\mathcal B$ such that
\[
 \Adv^{\mathsf{cOTS}}(\mathcal F)
 \le
 \Adv_{\mathsf{LA}}^{\LAEUF}(\mathcal B)
 +\varepsilon_{\mathsf{loc}}
 +\varepsilon_{\mathsf{ctx}}
 +\varepsilon_{\mathsf{emb}}.
\]
The reduction preserves $\mathfrak R$ up to fixed public validation work and
the stated $\Delta_{\mathsf{emb}}$ overhead, performs no rewinding, and remains
valid in the QROM.
\end{theorem}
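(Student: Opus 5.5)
The plan is a direct black-box simulation in which $\mathcal B$ embeds the ledger game into the contextual one-time signature game run by $\mathcal F$.

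\textbf{Setup.} $\mathcal B$ calls $\mathcal O_{\mathsf{init}}(a)$ for a single fresh account $a$. It forwards the public registration record to $\mathcal F$ as the public key, and relays any key-dependent preprocessing of $\mathcal F$ unchanged. It then runs the public context process $\mathcal G_{\mathsf{ctx}}$ inside the ledger, relaying $\mathcal F$'s adversarial influence on public randomness, to obtain $\gamma$ and the snapshot $\mathcal T_0$. By (L4), the joint view of $\mathcal F$ at this point differs from the real cOTS experiment by at most $\varepsilon_{\mathsf{ctx}}$. This is a single hybrid step, and the reduction does not otherwise depend on whether the distance is statistical or computational.

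\textbf{Signing query.} When $\mathcal F$ asks to sign $\tau$, $\mathcal B$ computes $b=\mathsf{Req}(\mathsf{pub}(\mathcal T_0),\gamma,\tau)$ and calls $\mathcal O_{\mathsf{auth}}(a,b)$. By (L1), the honest process fixes exactly $\tau$, so $\Requested[a]=\{\tau\}$. It then emits $\mathsf{ev}=\Emit(\mathsf{st}_a,\mathcal T_0,\gamma,\tau)$ before reading any extension, and the pending pool delivers $\mathsf{ev}$ to $\mathcal B$. This is exactly the cOTS signing oracle's output. $\mathcal B$ forwards $\mathsf{ev}$ and never schedules it, so the honest event is censored indefinitely. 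This is permitted because safety despite censorship assumes no inclusion bound.

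\textbf{Forgery.} When $\mathcal F$ outputs $(\tau',\mathsf{ev}')$ with $\tau'\neq\tau$ and $V_{\mathsf{loc}}(\dots,\tau',\mathsf{ev}')=1$, $\mathcal B$ runs $\mathcal S_{\mathsf{emb}}$. Except with probability $\varepsilon_{\mathsf{emb}}$ (L3), this yields a finalized extension $\mathcal T'$ with $\Judge(\mathsf{pp},\mathcal T',a,\tau')=1$. The pair $(a,\tau')$ is then an eligible target: $a\in\mathsf{Hon}$, $a$ was never corrupted, and $\Requested[a]$ never contains $\tau'$ at any step, in particular at the acceptance step of the canonical witness. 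Historical semantics ensures that later finalization does not undo the judgment. Since the winning condition fires whenever any eligible target exists, $\mathcal B$ wins.

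\textbf{Error terms.} The term $\varepsilon_{\mathsf{loc}}$ enters because verification in the cOTS experiment is defined via $V_{\mathsf{loc}}$: local completeness of the honest emission is only needed to guarantee that the extracted scheme is a well-formed signature, and the union bound absorbs it. A union bound over the (L4), (L3) and (L1)/(L2) failure events gives the stated inequality.

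\textbf{Resources.} $\mathcal B$ is straight line: it runs $\mathcal F$ once and never rewinds it. Quantum oracle access is relayed to the common QROM oracle, so the argument remains valid in the QROM. The post-disclosure cost is $\mathcal F$'s cost, plus the fixed public work of computing $\mathsf{Req}$ and checking $V_{\mathsf{loc}}$, plus $\Delta_{\mathsf{emb}}$ for the scheduler.

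The main obstacle I expect is the context-matching step together with the "before reading any extension" clause. The concern is that adversarial scheduling during $\mathcal G_{\mathsf{ctx}}$, or the single censored pending event, could give $\mathcal F$'s ledger view information or timing that the cOTS game lacks, or that finalization used by $\mathcal S_{\mathsf{emb}}$ could conflict with the frozen $\mathcal T_0$. I would handle this by insisting that $\mathcal B$ performs no ledger actions between obtaining $\mathcal T_0$ and running $\mathcal S_{\mathsf{emb}}$, so that all divergence is charged to (L4) and (L3).
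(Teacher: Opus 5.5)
Your proposal is correct and follows essentially the same route as the paper's proof. The paper makes the same (L4) context hop first; forwards the single censored honest emission obtained from $\mathsf{Req}$ and $\mathcal O_{\mathsf{auth}}$ as the signature; runs the (L3) censor-and-embed scheduler on a winning $(\tau',\mathsf{ev}')$ to get an eligible $\LAEUF$ target, since $\tau'\notin\Requested[a]$ and $a$ is never corrupted; and closes with a union bound over the local-validity and embedding failures in the matched game, using the same straight-line, no-rewinding resource argument.
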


\begin{proof}
First replace the signature experiment's public registration and context view
by the ledger distribution.  Condition (L4) changes the success probability by
at most $\varepsilon_{\mathsf{ctx}}$, by coupling in the statistical case and
by a direct distinguishing reduction in the computational case.  In the
matched game, the ledger adversary initializes one honest account and runs the
identical context process.  When $\mathcal F$ requests a signature on $\tau$, it submits
$b=\mathsf{Req}(\mathsf{pub}(\mathcal T_0),\gamma,\tau)$ to the corresponding
authorization.  By (L1), that process fixes $\tau$, and its one honest event is
exposed before any later transcript read.  The adversary forwards that event as
the signature and censors it. If $\mathcal F$ outputs a locally valid
$(\tau',\mathsf{ev}')$ with $\tau'\ne\tau$, condition (L3) supplies a finalized
extension on which $\Judge=1$ for $\tau'$.  Only $\tau$ was recorded in
$\Requested[a]$, so this wins $\LAEUF$.  The context game hop, local failure,
and embedding failure give the three additive terms.  The simulation copies the
forger's quantum computation without measurement or rewinding.
\end{proof}

Within the conditions of Definition~\ref{def:local},
Theorem~\ref{thm:extract} identifies when ledger authentication collapses to
conventional one-time authentication. The converse is tight within that class:
any contextual one-time signature whose public context remains valid for the
required embedding window can be viewed as a single event ledger authenticator.

\begin{proposition}[Converse]\label{prop:ots-converse}
Every contextual one-time signature whose public context remains admissible on
the ledger for the stated $\Delta_{\mathsf{emb}}$ window after signing induces
a single event ledger authenticator for a fixed context.  Register its
verification key, emit $(a,\gamma,\tau,\sigma)$, and let the ledger check the
context, state guards, and signature before creating a historical receipt.
\end{proposition}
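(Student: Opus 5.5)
The plan is to construct the induced authenticator explicitly and verify (L1)--(L4) of Definition~\ref{def:local} one at a time, with the natural choices $\Emit=\mathsf{Sign}$ and $V_{\mathsf{loc}}=\mathsf{Vf}$. Let the contextual one-time signature be $(\mathsf{Kg},\mathsf{Sign},\mathsf{Vf})$ with context process $\mathcal G_{\mathsf{ctx}}$.

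\textbf{Construction.} Define $\Init(\mathsf{pp},a)$ to run $\mathsf{Kg}$, keep the signing key in $\mathsf{st}_a$, and emit a registration event carrying $\mathsf{vk}_a$. Define $\Authorize(\mathsf{st}_a,b)$ to fix $\tau=\Can(q)$ from $b$ and emit the single event $\mathsf{ev}=(a,\gamma,\tau,\sigma)$ with $\sigma=\mathsf{Sign}(\mathsf{sk}_a,\gamma,\tau)$. The transition machine $\mathcal L$ accepts $\mathsf{ev}$ exactly when three conditions hold: $\gamma$ is still admissible, the state guards for $\tau$ hold, and $\mathsf{Vf}(\mathsf{vk}_a,\gamma,\tau,\sigma)=1$. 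On acceptance it creates a receipt $\Acc(a,\cdot,\cdot,\tau,\ldots)$. $\Judge$ returns $1$ iff some finalized receipt for $(a,\tau)$ exists. This makes $\Judge$ deterministic and polynomial time, and it gives the historical semantics required by Definition~\ref{def:la} on good prefixes, because receipts persist under $\preceq$.

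\textbf{(L1), (L2), (L4).} For (L1), take $\mathsf{Req}$ to be the identity-style encoding that writes $\tau$ into $b$. Authorization then fixes exactly $\tau$ and emits exactly one event before reading any extension of $\mathcal T_0$. (L2) is the correctness of the signature scheme, so $\varepsilon_{\mathsf{loc}}$ equals the scheme's correctness error. For (L4), both experiments run the same $\mathcal G_{\mathsf{ctx}}$ and publish the same $\mathsf{vk}_a$ and $\gamma$. The views therefore coincide, which gives $\varepsilon_{\mathsf{ctx}}=0$ up to encoding of the registration record.

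\textbf{(L3), the main obstacle.} I would define $\mathcal S_{\mathsf{emb}}$ as follows: keep the honest event pending and never include it, submit $\mathsf{ev}'=(a,\gamma,\tau',\sigma')$, include it in the next slot, and drive finalization. Acceptance requires the guards for the fresh $\tau'$ to hold on the prestate $\mathcal T_0$. That prestate is unchanged because the honest event was censored, and admissible canonical actions are precisely those whose guards can hold there. Acceptance also requires $\gamma$ to remain admissible through inclusion. This is where the hypothesis that the context stays admissible for the $\Delta_{\mathsf{emb}}$ window is used, so $\Delta_{\mathsf{emb}}$ is one inclusion plus one finalization delay and one $\mathsf{Vf}$ evaluation.

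The remaining point is that finalization may fail or the transition may deviate. I would charge these events to $\varepsilon_{\mathsf{emb}}\le\varepsilon_{\mathsf{fin}}+\varepsilon_{\mathsf{state}}$. The key subtlety is that the adversary controls scheduling, so the scheduler, rather than any liveness assumption, is what guarantees inclusion. This keeps (L3) valid without an honest inclusion bound.
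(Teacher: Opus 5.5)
Your proposal is correct for the stated claim and follows the paper's proof in substance: you use the same construction with $\Emit=\mathsf{Sign}$ and $V_{\mathsf{loc}}=\mathsf{Vf}$, signature correctness for (L2), censoring the honest event so the cell and key stay unused and invoking the context-window premise for (L3) with finality and state failures charged to $\varepsilon_{\mathsf{emb}}$, and the identical context process for (L4) (one small fix: since $\tau=\Can(\ldots,b,\rho)$ already contains $b$, $\mathsf{Req}$ should return the application component of $\tau$ rather than write $\tau$ into $b$). What you leave out is the paper's one-time discipline and its closing security transfer. The paper keeps a private $\mathtt{unused}/\mathtt{pending}$ flag under which $\Authorize$ refuses any second request even after the first event is censored, lets acceptance consume the cell, and then proves $\Adv^{\LAEUF}_{\mathsf{LA}_{\Sigma_1}}\le\Adv^{\mathsf{mu\text{-}cOTS}}_{\Sigma_1}+\varepsilon_{\mathsf{fin}}+\varepsilon_{\mathsf{can}}+\varepsilon_{\mathsf{state}}$; your $\Authorize$ signs every request, which leaves single event form intact but can expose two signatures under one one-time key, so the security transfer behind the equivalence in Corollary~\ref{cor:one-event} would not go through for your construction.
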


\begin{corollary}[Single event equivalence]\label{cor:one-event}
Up to context, canonicalization, finality, and state execution errors,
single event ledger authentication for a fixed context is black-box equivalent to
contextual one-time signatures in the same online resource class.  If
$\gamma$ is empty or fixed at key generation and validity does not expire, the extracted object is an ordinary one-time signature.
\end{corollary}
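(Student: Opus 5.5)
The statement to prove is Corollary~\ref{cor:one-event}. It follows by combining the two directions already available: Theorem~\ref{thm:extract} (ledger to signature) and Proposition~\ref{prop:ots-converse} (signature to ledger). The work is to check that each direction is black-box, preserves the online resource class $\mathfrak R$, and loses only additive context, canonicalization, finality, and state-execution errors.

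\emph{Extraction direction.} Fix $\mathsf{LA}$ in single event form with safety despite censorship. Theorem~\ref{thm:extract} gives a straight-line adversary $\mathcal B$ with $\Adv^{\mathsf{cOTS}}(\mathcal F)\le\Adv^{\LAEUF}_{\mathsf{LA}}(\mathcal B)+\varepsilon_{\mathsf{loc}}+\varepsilon_{\mathsf{ctx}}+\varepsilon_{\mathsf{emb}}$. The overhead stays inside $\mathfrak R$ because $\Delta_{\mathsf{emb}}$ is a fixed public vector and the reduction does not rewind. The loss $\varepsilon_{\mathsf{emb}}$ absorbs the finality and state-execution failures of the scheduler $\mathcal S_{\mathsf{emb}}$, namely $\varepsilon_{\mathsf{fin}}$ and $\varepsilon_{\mathsf{state}}$. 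It also absorbs any $\varepsilon_{\mathsf{can}}$ arising when $\tau'\ne\tau$ must be a fresh canonical action.

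\emph{Converse direction.} Given a contextual one-time signature $\Sigma$, build $\mathsf{LA}_\Sigma$ as in Proposition~\ref{prop:ots-converse}, then verify two things.
\begin{enumerate}[leftmargin=1.5em]
  \item It has single event form. Take $\mathsf{Req}$ to encode $\tau$, $\Emit$ to be signing, and $V_{\mathsf{loc}}$ to be the ledger's verification check. Then (L1)--(L2) are immediate, (L4) holds with $\varepsilon_{\mathsf{ctx}}=0$ since the same context process is used, and (L3) follows from context admissibility during the $\Delta_{\mathsf{emb}}$ window: the scheduler simply includes and finalizes $\mathsf{ev}'$.
  \item It is $\LAEUF$-secure. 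The reduction mirrors the proof of Theorem~\ref{thm:sig-embed}, specialized to one signing query. The reduction forwards the single authorization to the signing oracle. At the acceptance step of the first eligible receipt, it outputs the embedded $(\tau',\sigma')$. Freshness of $\tau'$ relative to $\Requested$ then gives a cOTS forgery, up to $\varepsilon_{\mathsf{fin}}+\varepsilon_{\mathsf{can}}+\varepsilon_{\mathsf{state}}$.
\end{enumerate}

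\emph{Degenerate case.} When $\gamma$ is empty or fixed at key generation and validity never expires, the context process is trivial. The preprocessing-then-context phase of the cOTS game then merges with ordinary key generation, and the window constraint disappears, so the game is the ordinary one-time signature game.

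The main obstacle is resource preservation in the converse when adversarial ledger activity is interleaved. Scheduling calls must be shown simulable without any oracle queries beyond those the ledger's own verification makes. Otherwise the extracted forger could fall outside $\mathfrak R$. I would handle this by charging ledger verification evaluations to the forger's budget, which is the same accounting convention stated for $\mathfrak R$.
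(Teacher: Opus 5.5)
Your route is the paper's: the extraction direction is Theorem~\ref{thm:extract}, and the converse matches the appendix proof of Proposition~\ref{prop:ots-converse}, which checks (L1)--(L4) with $V_{\mathsf{loc}}=\mathsf{Vf}_1$ and $\varepsilon_{\mathsf{ctx}}=0$ and then reduces $\LAEUF$ security to cOTS as in the signature embedding. You should state, rather than assume, what justifies ``one signing query'': the honest state machine must refuse a second authorization under the same one-time key even after the first event is censored, since a ledger nonce cannot prevent two pending signatures. Also, because $\LAEUF$ is multi-user, the reduction targets multi-user cOTS, or otherwise loses a factor $q_{\mathsf{init}}$ from guessing the target cell.
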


Encrypted mempools, trusted ordering, proof systems, and substantial sequential
work can change the resource class or violate censor and embed.  Schemes using
several events escape because the verifier uses a prior finalized event
unavailable to a self-contained signature verifier.

\subsection{Why eligibility must close before revelation}

The previous section identified when ledger authentication reduces to ordinary one-time authentication. We now consider the second boundary, where a credential is intentionally revealed after an earlier commitment. The question is no longer whether authorization is local, but whether finalized history permanently fixes every future use of the disclosed credential. The following definition isolates this proof boundary.


\begin{definition}[Rebindable reveal protocol]\label{def:rebindable}
Fix a live account cell $u$.  A rebindable reveal protocol has the following
properties.
\begin{enumerate}[label=(R\arabic*),leftmargin=2.5em]
  \item a public predicate
  $\mathsf{Eligible}(\mathcal T,u,c)$ for precursor evidence $c$.
  \item an honest final event $r=(u,\tau,s,w)$ that exposes a credential $s$
  and an opening witness $w$, with $\mathsf{CredOK}(u,s)=1$.
  \item public predicates $\mathsf{CredOK}(u,s)$ and
  $\mathsf{Open}(c,u,\tau,s,w)$ such that, while $u$ is live, a final event is
  accepted and creates a receipt whenever its canonical guards hold and some
  eligible $c$ satisfies both predicates.
  \item an efficient algorithm
  \[
    \mathsf{Rebind}(u,s,\tau')\longrightarrow(c',w')
  \]
  that, for any other admissible action $\tau'$, except with probability
  $\varepsilon_{\mathsf{rb}}$, satisfies
  \[
   \mathsf{Open}(c',u,\tau',s,w')=1.
  \]
\end{enumerate}
The live cell has at least two distinct admissible canonical actions.  It
remains live until a final event is accepted.  Precursor evidence alone does
not consume it.
\end{definition}

This class includes a credential based on a hash preimage with commitments of
the form $H(\tau,s,h^+,\cdots)$.  Once $s$ is public, anyone can select a fresh
action, successor head, and any public opening randomness, then compute matching
evidence.

\begin{definition}[Admission after disclosure]\label{def:post}
Fix an efficient scheduler $\mathcal P$.  After an honest final event
$r=(u,\tau,s,w)$ is emitted but before it is included, $\mathcal P$ censors
$r$, chooses a distinct admissible action $\tau'$, and computes
\[
 (c',w')\leftarrow\mathsf{Rebind}(u,s,\tau').
\]
The admission event after disclosure occurs if $\mathcal P$ produces a finalized
extension in which this same $c'$ is eligible while $u$ is live and schedules
$r'=(u,\tau',s,w')$ for inclusion before either condition expires.  Write
$p_{\mathsf{post}}(\mathcal P)$ for its probability.
\end{definition}

\begin{theorem}[Admission attack after disclosure]\label{thm:closure}
For every rebindable reveal protocol and scheduler after disclosure
$\mathcal P$, there is a $\LAEUF$ adversary $\mathcal A_{\mathcal P}$ such that
\[
 \Adv_{\mathsf{LA}}^{\LAEUF}(\mathcal A_{\mathcal P})
 \ge
 p_{\mathsf{post}}(\mathcal P)
 -\varepsilon_{\mathsf{rb}}
 -\varepsilon_{\mathsf{fin}}
 -\varepsilon_{\mathsf{state}}.
\]
In particular, if evidence constructed after disclosure is always admissible,
the protocol lacks safety despite censorship.
\end{theorem}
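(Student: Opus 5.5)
The plan is to wrap $\mathcal P$ directly inside an $\LAEUF$ adversary, with no extraction and no rewinding, so that the only losses are the three excluded events named in the bound.

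\textbf{Construction of $\mathcal A_{\mathcal P}$.} The adversary $\mathcal A_{\mathcal P}$ calls $\mathcal O_{\mathsf{init}}$ for the account owning the live cell $u$. It then issues the authorization request whose honest process ends in the final event $r=(u,\tau,s,w)$. The experiment places $\tau$ in $\Requested[a]$ before the first emission, and $r$ is delivered to $\mathcal A_{\mathcal P}$ while pending. From this point $\mathcal A_{\mathcal P}$ runs $\mathcal P$ verbatim and plays its scheduling choices as ledger interface calls: censor $r$, pick $\tau'\neq\tau$, compute $(c',w')\leftarrow\mathsf{Rebind}(u,s,\tau')$, finalize an extension in which $c'$ is eligible, and schedule $r'=(u,\tau',s,w')$ for inclusion. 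Two constraints matter here. First, $\mathcal A_{\mathcal P}$ issues no further authorization calls and no corruption call, so $\nu_{\mathsf{cor}}[a]=\infty$ and $\Requested[a]$ stays $\{\tau\}$ together with any earlier requests. Second, $\mathcal P$ must choose $\tau'$ outside $\Requested[a]$. Since the live cell has at least two admissible actions, I would take ``distinct'' in Definition~\ref{def:post} to mean fresh relative to the requested set at that moment. If earlier requests could collide, I would first restrict the adversary to a fresh account whose only request is $\tau$.

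\textbf{Verifying the win.} Condition on four events: the admission event of Definition~\ref{def:post}, success of $\mathsf{Rebind}$ (so $\mathsf{Open}(c',u,\tau',s,w')=1$), $\neg\Bad_{\mathsf{fin}}$, and sound state execution. By (R2), $\mathsf{CredOK}(u,s)=1$. The cell $u$ is still live, because $r$ was censored and precursor evidence does not consume $u$. The evidence $c'$ is eligible in the finalized prefix, and the canonical guards for $\tau'$ hold by admissibility. So (R3) makes $\mathsf{Apply}$ accept $r'$ and create a receipt; soundness of execution, whose failure costs $\varepsilon_{\mathsf{state}}$, rules out this step failing. Once $\mathcal A_{\mathcal P}$ lets the receipt finalize, the monotonicity of finality together with the historical semantics of $\Judge$ gives $\Judge(\mathsf{pp},\mathcal T^{\mathsf{fin}},a,\tau')=1$ and $\nu^\star_{\mathsf{fin}}(a,\tau')<\infty$.

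\textbf{The main obstacle: eligibility via the canonical witness.} The winning condition is tested at $\nu^\star_{\mathsf{acc}}$, the acceptance step of the canonical witness $R^\star(a,\tau')$. That witness could in principle be some earlier receipt rather than $r'$. This is harmless, for two reasons. Any receipt for $\tau'$ is accepted after $\tau$ was requested, and $\Requested[a]$ never gains $\tau'$. Also, $a$ is never corrupted. Hence every witness satisfies $\tau'\notin\Requested_{\nu^\star_{\mathsf{acc}}}[a]$ and $\nu_{\mathsf{cor}}[a]=\infty>\nu^\star_{\mathsf{acc}}$. The pair $(a,\tau')$ is therefore an eligible target. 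The lexicographic selection only requires that some eligible target exists, so $\mathcal A_{\mathcal P}$ wins.

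\textbf{Probability accounting.} A union bound removes the three failure events (rebinding, finality, state execution) from $p_{\mathsf{post}}(\mathcal P)$, which gives the displayed inequality. For the ``in particular'' clause: if newly constructed evidence is always admissible, some efficient $\mathcal P$ achieves $p_{\mathsf{post}}$ close to $1$. With negligible $\varepsilon_{\mathsf{rb}},\varepsilon_{\mathsf{fin}},\varepsilon_{\mathsf{state}}$, the $\LAEUF$ advantage is then non-negligible even though the honest event was never included, which contradicts Definition~\ref{def:cis}.
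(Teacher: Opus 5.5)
Your proposal is correct and follows essentially the same route as the paper's proof. In both, $\mathcal A_{\mathcal P}$ makes a single request for $\tau$, censors the honest reveal, runs $\mathcal P$ with $\mathsf{Rebind}$ without corrupting or requesting $\tau'$, and takes a union bound over the rebinding, finality, and state-execution failures. Your extra check, that whichever receipt serves as the canonical witness $R^\star(a,\tau')$ meets the eligibility conditions (not only the receipt for $r'$), is a point the paper passes over quickly, and you handle it correctly.
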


\begin{proof}
The adversary requests $\tau$, exposes and censors its honest final event, and
runs $\mathcal P$.  On the admission event after disclosure, the same evidence
$c'$ returned with $w'$ is eligible when $\mathcal P$ schedules
$r'=(u,\tau',s,w')$.  The credential and eligibility checks hold, and the cell
remains live.  The rebinding check fails with probability at most
$\varepsilon_{\mathsf{rb}}$.  Outside finality or failure of state execution,
$r'$ creates a finalized historical receipt for $\tau'$, while only $\tau$
belongs to $\Requested[a]$.  A union bound gives the claim.
\end{proof}

The theorem explains why finalizing one commitment is insufficient: every commitment that may later authorize using the disclosed credential must already be fixed.

\begin{corollary}[Eligibility closure]\label{cor:closure}
Within the rebindable reveal class, and without an honest inclusion assumption,
every efficient scheduler's probability of making evidence constructed after
disclosure usable must be negligible. When admission uses only public
computation and ledger scheduling, this condition is enforced by requiring
every eligible precursor for the live credential to be irrevocably represented
by finalized state before the first event exposing that credential.
\end{corollary}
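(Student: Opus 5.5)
The corollary has two parts, a necessity claim and a sufficiency claim, and I will prove them separately.

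\emph{Necessity.} This follows from Theorem~\ref{thm:closure} by contraposition. Suppose the rebindable reveal protocol has safety despite censorship in the sense of Definition~\ref{def:cis}. Fix any efficient scheduler $\mathcal P$ after disclosure. Theorem~\ref{thm:closure} supplies an efficient $\LAEUF$ adversary $\mathcal A_{\mathcal P}$ satisfying
\[
 p_{\mathsf{post}}(\mathcal P)
 \le
 \Adv_{\mathsf{LA}}^{\LAEUF}(\mathcal A_{\mathcal P})
 +\varepsilon_{\mathsf{rb}}
 +\varepsilon_{\mathsf{fin}}
 +\varepsilon_{\mathsf{state}}.
\]
The advantage is negligible by assumption. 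The terms $\varepsilon_{\mathsf{rb}}$, $\varepsilon_{\mathsf{fin}}$ and $\varepsilon_{\mathsf{state}}$ are negligible by the standing hypotheses on the protocol and ledger. Hence $p_{\mathsf{post}}(\mathcal P)$ is negligible.

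One point needs checking: that $\mathcal A_{\mathcal P}$ uses no honest inclusion assumption. It only requests $\tau$, censors $r$, and runs $\mathcal P$. This is exactly why the conclusion holds without liveness.

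\emph{Sufficiency.} I will show that finalized closure forces $p_{\mathsf{post}}=0$ outside $\Bad_{\mathsf{fin}}$. Assume every $c$ with $\mathsf{Eligible}(\mathcal T,u,c)=1$ on any finalized extension is irrevocably represented in the finalized prefix $\mathcal T^{\mathsf{fin}}_{\nu_0}$, where $\nu_0$ is the step before $r$ is emitted. Since admission uses only public computation and ledger scheduling, eligibility of $c'$ in a later finalized extension implies $c'\in E_0$. Here $E_0$ is the set fixed at $\nu_0$, and this step uses monotonicity of finalized prefixes. Any $c'$ that $\mathcal P$ outputs must then be a member of $E_0$. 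So a successful $\mathcal P$ either reuses pre-existing evidence, or produces an element of $E_0$ that is a function of $s$.

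\emph{Main obstacle.} The delicate step is this last one: arguing that $c'$ computed by $\mathsf{Rebind}(u,s,\tau')$ after disclosure cannot coincide with an element of $E_0$ that was fixed before $s$ was revealed. The definition of admission after disclosure counts only newly constructed evidence, namely this same $c'$. For such $c'$ to be eligible, it must equal a precommitted element. That element was either honestly produced for $\tau$, which fails because the action differs, or adversarially produced before disclosure without knowledge of $s$.

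I would handle this by stating explicitly that ``usable'' means evidence not in $E_0$. Under that reading, closure makes the admission event impossible except on $\Bad_{\mathsf{fin}}$ or a state execution failure, giving the bound $p_{\mathsf{post}}\le\varepsilon_{\mathsf{fin}}+\varepsilon_{\mathsf{state}}$. The residual risk is that pre-existing adversarial evidence happens to open to $s$. That is a search-before-disclosure question, and I would defer it to the finalized target search rule later in the paper rather than resolve it here.
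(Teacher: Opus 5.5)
Your proposal is correct and matches the paper, which treats the corollary as immediate. The negligibility requirement is the contrapositive of Theorem~\ref{thm:closure}, whose adversary only censors and never relies on inclusion. The closure clause is the observation that a frozen eligible set excludes evidence first built after disclosure, with any coincidence between $\mathsf{Rebind}$'s output and already-finalized evidence charged, as you propose, to the bad norm at the cut of Lemma~\ref{lem:cut} (compare the first of the two ``tempting failures'' for $\CR$).

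One small correction: Definition~\ref{def:rebindable} does not force $\mathsf{Open}$ to bind the action, so honest evidence for $\tau$ need not fail for $\tau'$. Because you exclude all of $E_0$ from ``usable'', this leaves your conclusion intact and only enlarges the residual you defer.
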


A finalized accumulator root can close a large set.  Closure does not require
a literal list.  Merely placing $s$ inside a commitment is insufficient if a
new commitment can still be admitted after $s$ leaks.  The corollary is
necessary only for rebindable reveal protocols.  A one-time signature that
cannot be rebound, trusted ordering, a confidential pending pool, or an
independent computational admission gate escapes this implementation-level
conclusion.

\section{Ledger executions in the QROM}\label{sec:lqrom}

A ledger protocol is not a non-interactive QROM algorithm.  It is an execution in which a quantum adversary communicates through classical mempool, ledger, authorization, and corruption interfaces, while the ledger and honest parties evaluate the same random oracle.  We specify that joint execution before using a finalized transcript as a proof cut.

\subsection{Execution semantics and query accounting}

Let $H:\mathcal X\rightarrow\mathcal Y$ be uniformly random with $|\mathcal Y|=2^\lambda$.

\begin{definition}[Ledger QROM execution]\label{def:lqrom}
The experiment $\LQROM_{\mathcal E,\mathcal A}^{H}(1^\kappa)$ consists of a classical ledger environment $\mathcal E$, a quantum adversary $\mathcal A$, and the common oracle $H$.

Between classical calls, $\mathcal A$ may apply an arbitrary quantum circuit and query
\[
 |x,z,w\rangle\longmapsto|x,z\oplus H(x),w\rangle.
\]
To call a ledger interface, it designates a message register.  The environment
measures that register in the computational basis, processes the resulting
classical command, and returns a classical response.  Only the interface
register is measured.  All remaining adversarial workspace persists across
slots, blocks, finality events, and calls.

The environment may evaluate the same $H$ on classical inputs while validating events or producing honest messages.  It updates the pending pool, state, clock, transcript, and finalized prefix, and exposes every public emission and receipt classically.
\end{definition}

The definition does not give coherent access to the state transition, mempool, or finality predicate.  A ledger accepting superpositions of transactions would be a different primitive. We distinguish three sources of oracle evaluations.  Direct quantum queries by the adversary are counted by $\Qadv$.  Verifier mediated evaluations on
adversarially supplied inputs whose result affects a visible response are counted by $\Qmed$.  Honest evaluations on private inputs are counted by $\Qhon$ in the joint oracle state, although an instance proof may show that some cannot contribute to the relevant search relation.  A conservative bound uses
\[
 \Qeff=\Qadv+\Qmed+\Qhon+Q_{\mathsf{out}},
\]
where $Q_{\mathsf{out}}$ includes retained candidate checks, including checks
at the first accepting transition.

Charging $\Qmed$ is essential.  Even an accept/reject result can provide a
classical test that depends on the oracle.  A classical evaluation is safely simulated
as one quantum query.  Omitting it lets rejected ledger events create an
uncounted search interface.  Conversely, finality does not reset the
adversary's state or query budget.

\subsection{Adaptive finalized cuts}

Eligibility closure ensures that the set of admissible targets eventually becomes fixed. We next define a closed finalized cut, the point at which this set becomes immutable and the remaining execution is analyzed under a fixed target quantum search bound. Let $\mathcal F_t$ be the classical filtration generated by the public
transcript, responses, clock, randomness, and finality certificates through slot $t$.

\begin{definition}[Closed finalized cut]\label{def:cut}
A finalized cut is an $\mathcal F_t$ stopping time $\Theta$. At the cut, the
environment computes
\[
 C_\Theta=g(\mathcal T^{\mathsf{fin}}_\Theta)
 \subseteq\mathcal Y
\]
and copies it into a read only classical register $C$. The cut is
\emph{closed for the acceptance relation} if, on every continuation of a cut
branch, no value outside $C_\Theta$ can become eligible for a later winning
check. Equivalently, whenever a later acceptance is witnessed by a checked
point $x$, its target value satisfies $H(x)\in C_\Theta$. Thus closure
constrains the eligibility relation used by later verification, not merely the
copied register. The adversary may influence $\Theta$ and $C_\Theta$ through
scheduling, oracle queries, and events before the cut, but every branch must
satisfy the public bound $|C_\Theta|\le K$.
\end{definition}

When slots contain several ordered transition phases, the cut definition must
name the phase at which the copy is taken.  A close at the end of slot $t$
includes every finalization processed in that slot and precedes every event of
slot $t+1$.  This convention rules out an otherwise ambiguous admission race
within the same slot.

The target set need not be independent of $H$.  It may contain honest
commitment digests and oracle outputs selected by the adversary.  The essential
condition is that a classical copy is frozen before the later disclosure or
computation used to open it. Fix a classical cut transcript $z$ and a hidden
value $\omega$ that is sampled and stored in a classical environment register
no later than the cut, although it may be revealed afterward. The predicate
$\Fresh(z,\omega,x)$ is deterministic, oracle independent, and fixed from the
cut onward. Any oracle evaluation needed to determine freshness must instead
be represented in the target relation and charged to the query budget. In the
compressed oracle representation, a database is a
partial function
$D:\mathcal X\rightarrow\mathcal Y\cup\{\bot\}$.  Define
$\Hit_{z,\omega,C}(D)=1$ iff there exists an $x$ satisfying
\[
 x\in\operatorname{dom}(D),\qquad
 \Fresh(z,\omega,x)=1,
 \qquad D(x)\in C.
\]

The same argument applies to any exact oracle purification with a
protocol specific database projector for which every oracle free map
intertwines the projector, every charged oracle evaluation satisfies the stated
single query transition bound, and the retained final checks satisfy the stated
output recording inequality. The proof of Lemma~\ref{lem:cut} uses only these
properties. The CCR application uses this form with a compressed database for
its marked slice oracle.

At the adaptive cut, write the purified state as a direct sum over classical branches,
\[
 |\Psi_\Theta\rangle
 =\bigoplus_{\theta,z,\omega,C}
 |\theta,z,\omega,C\rangle
 |\psi_{\theta,z,\omega,C}\rangle.
\]
Let $\Pi_{\Hit}$ project onto compressed databases satisfying the relation.
The \emph{bad norm at the cut} is
\[
 \beta_{\mathsf{cut}}
 =\|\Pi_{\Hit}|\Psi_\Theta\rangle\|.
\]
This projector is analytic as the experiment does not measure the compressed database or disturb the adversary at finality.

\subsection{The finalized target lemma}

Once the target set is fixed by a closed finalized cut, the remainder of the execution reduces to quantum search over that immutable set.

\begin{lemma}[Finalized target search]\label{lem:cut}
Consider a closed finalized cut with $|C|\le K$, and let $W$ be the event of
the first winning acceptance. Assume
\[
 W\subseteq\{\Theta<\infty\},
\]
and that, on every branch in $W$, the cut occurs before every checked oracle
evaluation used to witness that acceptance. Branches on which the cut does not
occur are assigned $W=0$. Suppose the retained validation history contains one
of at most $\ell$ checked points $x_j$ for which
\[
 \Fresh(z,\omega,x_j)=1,
 \qquad H(x_j)\in C.
\]
For every such check, the verifier evaluates $y_j=H(x_j)$ after the cut,
charges that evaluation to the post-cut budget, and copies $(x_j,y_j)$ into a
classical check register. Let at most $Q$ oracle evaluations occur after the
cut, counting every direct, mediated, honest, and final check evaluation
capable of affecting the relation. There is a universal constant
$c_{\mathsf{co}}$ such that
\[
 \sqrt{\Pr[W]}
 \le
 \beta_{\mathsf{cut}}
 +Q\sqrt{\frac{c_{\mathsf{co}}K}{2^\lambda}}
 +\sqrt{\frac{2\ell}{2^\lambda}}.
\]
Consequently,
\[
 \Pr[W]
 \le
 3\beta_{\mathsf{cut}}^2
 +\frac{3c_{\mathsf{co}}KQ^2}{2^\lambda}
 +\frac{6\ell}{2^\lambda}.
\]
The standard transition capacity formulation permits
$c_{\mathsf{co}}=10$~\cite{chung2021compressed}.
\end{lemma}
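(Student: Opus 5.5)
We will run the whole execution in Zhandry's compressed oracle representation, purified from the start, and measure progress by the norm of the component of the joint state lying in the image of $\Pi_{\Hit}$. Classical measurements made by the environment at interface calls commute with the database register. We may therefore defer them and work with the direct sum over classical branches written before the lemma. The cut $\Theta$ is a stopping time of the classical filtration. Hence, after the cut, each branch $(\theta,z,\omega,C)$ carries a fixed read-only copy of $C$, of $z$, and of $\omega$. The predicate $\Fresh(z,\omega,\cdot)$ and the projector $\Pi_{\Hit}$ are then fixed operators on each branch, and $\Pi_{\Hit}$ is block diagonal across branches. At the cut, the bad component has norm exactly $\beta_{\mathsf{cut}}$ by definition.

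\textbf{Step 1 (oracle-free maps).} Every operation after the cut that makes no oracle evaluation acts trivially on the database register. These include adversarial unitaries, environment classical processing, and interface measurements. Such operations commute with $\Pi_{\Hit}$ and do not increase $\|\Pi_{\Hit}|\Psi\rangle\|$. Closure of the cut is what guarantees this: it rules out any later process that changes which targets count, since $C$ is a classical copy frozen in a register no continuation can write.

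\textbf{Step 2 (charged queries).} Each of the at most $Q$ post-cut evaluations is a compressed-oracle query. This covers direct, mediated, honest, and final checks, each classical one simulated as a single quantum query. We then invoke the transition-capacity bound of~\cite{chung2021compressed}. For the relation ``some fresh $x$ in the database with $D(x)\in C$'', one query moves amplitude from $\neg\Hit$ to $\Hit$ of norm at most $\sqrt{c_{\mathsf{co}}K/2^\lambda}$. The reason is that the freshly sampled value on one queried input lands in a set of size at most $K$, which has weight $K/2^\lambda$; the constant $10$ absorbs the decompression cross terms. The triangle inequality over the $Q$ queries, combined with Step~1, bounds the final bad norm by $\beta_{\mathsf{cut}}+Q\sqrt{c_{\mathsf{co}}K/2^\lambda}$.

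\textbf{Step 3 (output recording).} The winning event $W$ requires a retained check $(x_j,y_j)$ with $\Fresh=1$ and $y_j\in C$, and there are at most $\ell$ candidates. We apply the standard recording lemma: the probability that a classical output satisfies the relation while the database is in $\neg\Hit$ contributes an additional amplitude at most $\sqrt{2\ell/2^\lambda}$. Alternatively, we count each final check as a query whose result is copied. In either case, $\sqrt{\Pr[W]}\le\|\Pi_{\Hit}|\Psi_{\mathrm{end}}\rangle\|+\sqrt{2\ell/2^\lambda}$. The hypothesis that each witnessing evaluation occurs after the cut ensures these checks fall within the charged post-cut budget. It also ensures that branches with $\Theta=\infty$ contribute nothing. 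The squared form follows from $(u+v+w)^2\le3(u^2+v^2+w^2)$.

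\textbf{Main obstacle.} We expect the hardest part to be Step~2 on an adaptive cut with $C$ correlated to $H$. The per-query capacity bound is stated for a fixed relation, so we must show it holds uniformly on each branch even though $C$ may consist of oracle outputs already recorded in $D$. We would first handle this by conditioning on the branch and noting that the capacity bound for a fixed relation depends only on $|C|\le K$. Entries of $D$ already recorded before the cut and already in $C$ would then fall into the $\beta_{\mathsf{cut}}$ term if they are fresh. If they are not fresh, they are excluded by the fixed freshness predicate.
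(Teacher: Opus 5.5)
Your proposal is correct and follows essentially the same route as the paper's appendix proof: purify and compress the oracle, write the cut state as a direct sum over classical branches with a block-diagonal $\Pi_{\Hit}$, and apply the strongly recognizable transition bound uniformly on each branch. You then telescope over the $Q$ post-cut calls (padded per branch) using that oracle-free maps commute with $\Pi_{\Hit}$, and finish with output recording and $(u+v+w)^2\le 3(u^2+v^2+w^2)$.

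Two small corrections. First, the commutation in Step~1 holds because oracle-free maps touch neither the compressed database nor the read-only cut registers. Closure plays a different role: it guarantees that a winning acceptance is witnessed by a retained check with $H(x_j)$ in the frozen $C$, which is the lemma's retained-check hypothesis. Second, the recording lemma is needed in addition to charging each final check, not as an alternative to it.
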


We give a sketch of the proof below. The full proof of Lemma \ref{lem:cut} 
is in Appendix \ref{app:cut-proof}.

\begin{proof}[Proof sketch]
Replace the common random oracle with its purified compressed oracle
simulation~\cite{zhandry2019record}.  Decompose the state at the cut into its
$\Hit$ and $\neg\Hit$ components.  The first has norm
$\beta_{\mathsf{cut}}$ and is carried through the bound.  We neither measure it
nor condition on a nonexistent classical transcript of quantum queries.  On
the $\neg\Hit$ component, $C$ is classical and immutable.  For a fresh
unrecorded input, a newly assigned uniform oracle value enters $C$ with
probability at most $K/2^\lambda$.  This remains true under arbitrary
correlation between $C$ and the compressed database before the cut induced by the
classical transcript.  The environment never reads that analytic database
directly, and the frozen value of $C$ is not a function of the new value
assigned at the fresh input.  The transition capacity theorem bounds the
norm transferred from $\neg\Hit$ to $\Hit$ by one evaluation by
$\sqrt{c_{\mathsf{co}}K/2^\lambda}$.  Telescoping over all $Q$ direct and mediated evaluations gives the second amplitude term.

The recording lemma relates the accepted classical check history to its
database entries.  Retaining $\ell$ checked points contributes at most
$2\ell/2^\lambda$ in probability.  The triangle inequality proves the first
display, and $(a+b+c)^2\le3(a^2+b^2+c^2)$ proves the second.  For an adaptive
cut, apply the same uniform bound to each classical direct sum branch and
average, or equivalently pad every branch after a classical cut flag to the
maximum query budget.
\end{proof}

The lemma permits $C$ to contain $H(x_0)$ for inputs before the cut.  A known
authorized $x_0$ does not satisfy $\Fresh$.  Finding a different fresh input is
a fixed target preimage or second preimage problem.  If a fresh $x_0$ already
hits $C$, it belongs to the bad component at the cut and must be charged through
$\beta_{\mathsf{cut}}$.

This does not make generic hash commitments secure against preimage attacks.  An adversary may
find a collision before the cut, commit to one side, and retain the other.  A
  concrete protocol obtains the $Q^2/2^\lambda$ target bound only after proving
that every precomputed fresh alternative is included in a negligible bad norm
at the cut.  Otherwise QROM collision attacks remain possible. 

The previous lemma bounds the probability of a successful fresh target search following a single closed finalized cut. Summing these contributions over all closed cuts in an execution yields a bound on the probability of any fresh hit occurring during the account lifetime.

\begin{corollary}[Lifetime target accounting]\label{cor:lifetime}
Suppose protected episodes $j$ have closed target sets of sizes $K_j$ and
probability contributions $\varepsilon_{\mathsf{pre},j}$ from the bad component
at the cut.  Let
\[
 K_{\mathsf{life}}=\sum_j K_j,
 \qquad
 \varepsilon_{\mathsf{pre}}^{\mathsf{life}}
 =\sum_j\varepsilon_{\mathsf{pre},j},
\]
and let $Q_{\mathsf{life}}$ bound every relevant lifetime evaluation.  Then
\[
 \Pr[\mathsf{win}]
 \le
 3\varepsilon_{\mathsf{pre}}^{\mathsf{life}}
 +\frac{3c_{\mathsf{co}}K_{\mathsf{life}}
 Q_{\mathsf{life}}^2}{2^\lambda}
 +\frac{6\ell_{\mathsf{life}}}{2^\lambda}.
\]
\end{corollary}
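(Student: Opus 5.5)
The proof applies Lemma~\ref{lem:cut} once per protected episode and sums the resulting bounds. Amplitudes are not added across episodes; each episode's bound is squared first. The work is in making this decomposition valid.

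\textbf{Step 1: partition the winning event.} Index the protected episodes by $j$. Each episode has its own closed finalized cut $\Theta_j$, target register $C_j$ with $|C_j|\le K_j$, freshness predicate $\Fresh_j$, and at most $\ell_j$ retained checked points. Let $W_j$ be the event that the first winning acceptance in the lifetime execution is witnessed by a fresh hit into $C_j$. I assume, as part of the protocol-level hypothesis carried by the corollary, that every win falls in some episode. Then $\Pr[\mathsf{win}]\le\sum_j\Pr[W_j]$.

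\textbf{Step 2: apply the lemma to each episode.} Fix $j$ and apply Lemma~\ref{lem:cut} with cut $\Theta_j$ and target $C_j$. The post-cut budget $Q$ is bounded by $Q_{\mathsf{life}}$, since the lemma only needs an upper bound on the evaluations after the cut. This step relies on the fact that finality does not reset the state: the persistent quantum state is already absorbed into $\beta_{\mathsf{cut},j}$. By definition, $\beta_{\mathsf{cut},j}^2=\varepsilon_{\mathsf{pre},j}$. This gives
\[
 \Pr[W_j]\le 3\varepsilon_{\mathsf{pre},j}+\frac{3c_{\mathsf{co}}K_jQ_{\mathsf{life}}^2}{2^\lambda}+\frac{6\ell_j}{2^\lambda}.
\]

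\textbf{Step 3: sum over episodes.} Summing over $j$ is linear in $K_j$, in $\varepsilon_{\mathsf{pre},j}$ and in $\ell_j$. Writing $\ell_{\mathsf{life}}=\sum_j\ell_j$ produces exactly the stated bound.

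The main obstacle is justifying the crude use of $Q_{\mathsf{life}}$ in every episode, together with the partition of Step 1. Bounding each episode by the full lifetime budget is loose, but it is sound. The delicate part is the episode structure itself. Episode cuts are adaptive stopping times and may interleave, and the first win must be attributable to exactly one episode's closed relation. I would handle this with the padding argument from the proof sketch of Lemma~\ref{lem:cut}. After each cut flag, the branch is padded to the full budget, so the lemma's bound holds uniformly over the classical branches at cut $\Theta_j$. Averaging over those branches then gives $\Pr[W_j]$ without conditioning on the other episodes.

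An alternative route would sum amplitude contributions instead of probabilities. It would give a $(\sum_j\sqrt{K_j})^2$ dependence, which is worse. The probability-level union bound is therefore what yields the linear $K_{\mathsf{life}}$.
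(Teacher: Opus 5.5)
Your proposal follows the paper's proof: index the episodes by $j$, apply Lemma~\ref{lem:cut} to each one with $K=K_j$, $Q=Q_{\mathsf{life}}$, $\ell=\ell_j$ and $\beta_j^2\le\varepsilon_{\mathsf{pre},j}$, and then take a probability-level union bound over the events $W_j$, which together cover every fresh lifetime hit. The paper also spells out two points you leave implicit: an adaptively absent episode gets $K_j=0$ and an empty winning event, and a random number or size of episodes is handled by indexing every possible episode under public deterministic caps on $\sum_j K_j$ and $\sum_j\ell_j$, without conditioning on the future episode schedule.
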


\section{Worked instantiation of commit, close, reveal}\label{sec:ccr}

We now present a concrete instantiation of the framework using a recurring hash-based authenticator. The purpose of this construction is not to introduce an entirely new cryptographic primitive, but to identify the precise properties required for security by this design family and to show that it satisfies them within our framework.

We call the construction \emph{commit, close, reveal} ($\CR$).  The
construction is a state machine formulation of the Guy Fawkes idea and is
included to exercise the definitions and finalized target lemma. Commitment, waiting for finality, eligibility closure, and the conversion of theft into denial of service all have clear antecedents in Fawkescoin~\cite{bonneau2014fawkescoin}.

\subsection{Two tempting failures}

Closure is not sufficient when the precursor fails to bind the credential. Suppose a commitment contains an action and successor head but omits $s_i$. Before the close, an adversary chooses its own fresh action, plants the matching commitment, and waits.  After the honest pending reveal discloses $s_i$, it censors that reveal and combines the stolen credential with its already eligible action.  No random oracle inversion is needed.  In the language of
Section~\ref{sec:lqrom}, the protocol has failed to make the bad norm at the cut
small.

The opposite error is to bind $s_i$ but leave admission open.  The adversary
then learns $s_i$, computes a new commitment for its action, finalizes that
commitment while censoring the honest reveal, and opens it.  This is exactly
Theorem~\ref{thm:closure}.  The construction below therefore needs both
properties.  The precursor includes the hidden credential, and every eligible
precursor is finalized before disclosure.

\subsection{Construction}

Let $F:\bits^\kappa\times\bits^*\rightarrow\bits^{\lambda_s}$ be a PRF secure
against quantum adversaries~\cite{zhandry2012qprf}. Let $H_0$ and $H_1$ be
separate independent quantum random oracle interfaces, with output lengths
$\lambda_h$ and $\lambda_c$. Fix a public injective prefix free encoder
$\mathsf{Enc}$. It begins with a fixed suite and type tag, uses fixed width
encodings for bounded integers, and length delimits every variable length
field. If one master oracle implements both interfaces, the interface and type
tags give disjoint domains. The ledger rejects noncanonical encodings,
including noncanonical $u,d$, and $\tau$ fields.

Fix public caps $M,N_{\mathsf{cell}}\ge1$ and define
\[
 \begin{aligned}
 \mathsf{params}=(&\mathsf{suiteId},\mathsf{version},\mathsf{encId},
 \kappa,\lambda_s,\lambda_h,\lambda_c,\lambda_r,\\
 &M,D_{\mathsf{com}},N_{\mathsf{cell}},
 \mathsf{canId},\mathsf{finalityId}).
 \end{aligned}
\]
The initialization of account $a$ fixes its registration label $e$, samples
$k_a\leftarrow\bits^\kappa$ independently, and stores it in
$\mathsf{st}_a$. Distinct cells have distinct contexts and hence distinct PRF
inputs. For cell $u=(a,e,i)$, define
\[
 \ctx_u=\mathsf{Enc}(\mathtt{ctx};
 \mathsf{chainId},\mathsf{forkId},a,e,i,\mathsf{params})
\]
and
\[
 \begin{aligned}
 s_i&=F_{k_a}\bigl(\mathsf{Enc}(\mathtt{secret};\ctx_u)\bigr),\\
 h_i&=H_0\bigl(\mathsf{Enc}(\mathtt{head};\ctx_u,s_i)\bigr).
 \end{aligned}
\]
Registration derives $s_0,h_0$ this way and stores $a,e,h_0$, the parameter
tuple, and $N_{\mathsf{cell}}$ in finalized account state. The ledger requires
\[
 s\in\bits^{\lambda_s},\quad r\in\bits^{\lambda_r},\quad
 c\in\bits^{\lambda_c},\quad h\in\bits^{\lambda_h}.
\]
For fixed $u$, every well-formed commitment input has a unique factorization
\[
 \mathsf{Enc}(\mathtt{commit};\ctx_u,d,\tau,s,r)
 =\mathsf{Enc}_u(\varphi,s),
\]
where $\varphi$ contains every field other than the secret coordinate. This is
the factorization used in Lemma~\ref{lem:head-exposure}.

When registration, or the accepted reveal for cell $i-1$, first belongs to a
finalized prefix at slot $t_i^{\mathsf{open}}$, the ledger opens cell $u$ and
sets the deadline slot for the commitment window of cell $i$ as
\[
 d_i=t_i^{\mathsf{open}}+D_{\mathsf{com}},
\]
where $D_{\mathsf{com}}$ dictates how long the commitment phase remains open. Thus every cell receives a full commitment interval.  Starting the clock from
mere inclusion would let a later reorganization shorten or move the interval.

A commitment digest $c$ is an admission candidate for $u$ when a commit event
$(\mathtt{commit},u,d_i,c)$ first appears in a finalized prefix at a slot
$t_C$ satisfying
\[
 t_i^{\mathsf{open}}<t_C\le d_i.
\]
Order distinct candidates by finalization slot, their canonical position in the
finalized transcript, and then digest.  Let $C_u$ contain the first $M$
candidates in this order, or every candidate if fewer than $M$ appear.  These
and only these digests are eligible.  The ledger copies $C_u$ after processing
all finalizations in slot $d_i$.  It then freezes, and events finalized in
later slots can never enter it.  Thus
\begin{equation}
 |C_u|\le M,
 \qquad
\sum_u|C_u|\le M N_{\mathsf{life}}.
\label{eq:admission-cap}
\end{equation}
Here $N_{\mathsf{life}}$ denotes a public lifetime cap on protected cells.  The
full security and resource caps are fixed below.
The cap is a safety parameter rather than a liveness guarantee.  An adversary
may fill it before the honest digest is finalized and park the cell, but cannot
turn that exclusion into authorization of a fresh action.

To authorize application body $b$, the signer derives
\[
 u^+=(a,e,i+1),\qquad
 h_{i+1}=H_0\bigl(\mathsf{Enc}(\mathtt{head};\ctx_{u^+},s_{i+1})\bigr)
\]
and forms the complete canonical action
\[
 \tau=\Can(\mathsf{chainId},\mathsf{forkId},a,e,i,
           b,h_{i+1},d_i,\mathsf{params}).
\]
At this point the $\LAEUF$ authorization interface fixes and records $\tau$ in
$\Requested[a]$.  The signer then samples
$r_i\leftarrow\bits^{\lambda_r}$ and computes
\[
 c_i=H_1\bigl(
 \mathsf{Enc}(\mathtt{commit};\ctx_u,d_i,\tau,s_i,r_i)
 \bigr).
\]
It emits $(\mathtt{commit},u,d_i,c_i)$.

The signer waits until this digest belongs to $C_u$.  If it misses the close,
the signer never reveals $s_i$.  Otherwise it additionally waits until the
close has passed and emits
\[
 (\mathtt{reveal},u,\tau,s_i,r_i).
\]
The ledger accepts that event only when all of the following conditions hold:
\begin{enumerate}[leftmargin=1.5em]
  \item $u$ is the current live cell and the canonical action guards hold.
  \item the reveal is included after $d_i$.
  \item $H_0\bigl(\mathsf{Enc}(\mathtt{head};\ctx_u,s_i)\bigr)=h_i$.
  \item
  $H_1\bigl(\mathsf{Enc}(\mathtt{commit};
  \ctx_u,d_i,\tau,s_i,r_i)\bigr)\in C_u$.
\end{enumerate}
It executes $b$, installs the $h_{i+1}$ already contained in $\tau$, and
creates a historical receipt.  Only the first accepted reveal can consume the
cell.  When that receipt finalizes, its finalization slot becomes
$t_{i+1}^{\mathsf{open}}$.

The construction is the ledger authenticator
\[
 \CR=(\Setup,\Init,\Authorize,\Judge)
\]
relative to the fixed transition machine $\mathcal L_{\CR}$ implementing the
rules above. $\Setup$ returns $\mathsf{pp}=\mathsf{params}$.
$\Init(\mathsf{pp},a)$ samples $k_a$, derives
$h_0$, initializes private state
\[
 (k_a,e,0,\mathsf{pending}=\bot,\mathsf{mode}=\mathtt{live}),
\]
and returns the canonical registration event.

The honest state machine serializes authorization. $\Authorize$ accepts a
request only if $i<N_{\mathsf{cell}}$, the current cell is live, its close has
not been processed, and $\mathsf{pending}=\bot$. It fixes $\tau$ and atomically stores
\[
 \mathsf{pending}=(u,\tau,s_i,r_i,c_i,h_{i+1})
\]
before its first emission. A refused concurrent request returns $\bot$ before
fixing an action. If the close is observed with $\mathsf{pending}=\bot$, the
process parks the account. If $c_i\notin C_u$ at the close, it parks the account
without revealing $s_i$. If $c_i\in C_u$, it emits the reveal above and remains
pending while that event may be censored. When the receipt finalizes, the
reactive process observes it, advances the private state, and clears
$\mathsf{pending}$. The public transition has already advanced the ledger
state. After $N_{\mathsf{cell}}$ accepted cells, the account is exhausted and
no further request is admitted. Finally,
\[
 \Judge(\mathsf{pp},\mathcal T^{\mathsf{fin}},a,\tau)=1
\]
exactly when $\mathcal T^{\mathsf{fin}}$ contains a finalized
$\mathcal L_{\CR}$ receipt for an accepted reveal of $(a,\tau)$. Thus a
censored commitment parks the account rather than exposing its secret, and a
censored reveal cannot open a new admission interval under the exposed head.

Let $\delta_{\mathsf{slot}}\in\{0,1\}$ be the delay caused by rounding from the
close to the first strictly later inclusion slot.

\begin{proposition}[Completeness under bounded contention]\label{prop:ccr-complete}
Suppose an account submits an admissible serialized request at
$t_{\mathsf{req}}$ and remains uncorrupted through the displayed deadline.
Thus its current cell is live, $i<N_{\mathsf{cell}}$, and
no honest authorization process is pending. Suppose
\[
 t_i^{\mathsf{open}}<t_{\mathsf{req}},
 \qquad
 t_{\mathsf{req}}+\Delta_{\mathsf{inc}}
 +\Delta_{\mathsf{fin}}<d_i.
\]
Suppose also that fewer than $M$ distinct candidates precede the honest digest
in the canonical admission order, and that the canonical and application
guards for the requested action remain valid through the reveal deadline.
Under $(\Delta_{\mathsf{inc}},\Delta_{\mathsf{fin}})$ liveness, outside
liveness, finality, or state-execution failure, either the requested action
produces a finalized receipt by
\[
 d_i+\delta_{\mathsf{slot}}
 +\Delta_{\mathsf{inc}}+\Delta_{\mathsf{fin}}
\]
or a fresh finalized receipt for the same cell appears first. Consequently,
letting $\mathsf{Miss}$ denote failure of this dichotomy by the displayed
bound,
\[
 \Pr[\mathsf{Miss}]
 \le \varepsilon_{\mathsf{live}}+\varepsilon_{\mathsf{fin}}
 +\varepsilon_{\mathsf{state}}+\Adv_{\CR}^{\LAEUF}(\mathcal A).
\]
Outside a liveness, finality, or state-execution failure and a $\LAEUF$ win,
the requested action therefore has a finalized receipt by the displayed bound.
The construction emits at most two events, and the elapsed bound is
$L\le D_{\mathsf{com}}+\delta_{\mathsf{slot}}+
\Delta_{\mathsf{inc}}+\Delta_{\mathsf{fin}}$. Without the admission-rank
premise, the scheme provides safety but not completeness.
\end{proposition}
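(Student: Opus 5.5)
The plan is to trace the honest execution along its timeline. Outside the excluded failure events we show, deterministically, that the honest process reaches the reveal stage. Any remaining failure of the dichotomy is then exhibited as a $\LAEUF$ win. We work on the complement of $\Bad_{\mathsf{fin}}$, of the liveness failure event, and of state-execution deviation. These contribute $\varepsilon_{\mathsf{fin}}+\varepsilon_{\mathsf{live}}+\varepsilon_{\mathsf{state}}$ by a union bound, so monotone finality and sound application may be used freely.

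\textbf{Commit phase.} Because $t_i^{\mathsf{open}}<t_{\mathsf{req}}$, the current cell is live with no pending process, and $i<N_{\mathsf{cell}}$, the serialized $\Authorize$ accepts the request. It fixes $\tau$, records $\tau$ in $\Requested[a]$, and emits $(\mathtt{commit},u,d_i,c_i)$ at $t_{\mathsf{req}}$. We also need that the close has not yet been processed. This follows from $t_{\mathsf{req}}<d_i$, which is implied by the displayed inequality.

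The commit event is admissible, so liveness puts it in the finalized prefix by $t_{\mathsf{req}}+\Delta_{\mathsf{inc}}+\Delta_{\mathsf{fin}}<d_i$. Its finalization slot $t_C$ therefore satisfies $t_i^{\mathsf{open}}<t_C\le d_i$, so $c_i$ is an admission candidate. By the admission-rank premise, fewer than $M$ distinct candidates precede it, so $c_i\in C_u$ when the ledger freezes $C_u$ after slot $d_i$.

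One subtlety must be handled here. The cell may be consumed before the close, or at least before the honest reveal is included. Only reveals included after $d_i$ are accepted, and only the first accepted reveal consumes the cell. Hence any such consumption is an accepted reveal for $u$ with a finalized receipt, or one that will finalize, on some action $\tau''$.

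\textbf{Reveal phase.} Suppose no other reveal for $u$ is accepted first. The honest process observes the close with $c_i\in C_u$ and emits $(\mathtt{reveal},u,\tau,s_i,r_i)$ at the first strictly later slot, $d_i+\delta_{\mathsf{slot}}$. Acceptance conditions 1--4 then hold:
\begin{itemize}
\item the cell is live by assumption;
\item the guards remain valid through the reveal deadline by hypothesis;
\item inclusion occurs after $d_i$;
\item the head check holds because $h_i$ was installed from $s_i$, either by registration or by the previous receipt's $h_{i+1}$ field;
\item the commitment check holds because $c_i\in C_u$.
\end{itemize}
The event is therefore admissible. Liveness gives inclusion and finalization by $d_i+\delta_{\mathsf{slot}}+\Delta_{\mathsf{inc}}+\Delta_{\mathsf{fin}}$, and $\Judge(\mathsf{pp},\cdot,a,\tau)=1$ follows from the receipt.

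Otherwise some other reveal for $u$ on an action $\tau''$ is accepted first, with $\tau''\ne\tau$ or a distinct encoding of it. If $\tau''=\tau$ canonically, effect injectivity makes it the requested receipt, and the first branch of the dichotomy holds. If $\tau''\ne\tau$ canonically, it is fresh. The honest process serializes, so $\Requested[a]$ at the acceptance step contains only actions of earlier cells and $\tau$. The nonce $i$ in $\tau''$ excludes the earlier cells, since canonical actions embed $(a,e,i)$. The account is uncorrupted, so this receipt finalizing is precisely the second branch of the dichotomy.

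\textbf{Bounding $\mathsf{Miss}$.} The point to verify is that $\mathsf{Miss}$ can only occur when liveness, finality, or state execution fails. The other possibility is that the competing fresh receipt is accepted but does not finalize within the bound, or that the honest reveal is blocked by it. In that case, by liveness of finalization, an accepted fresh reveal for an honest uncorrupted account yields an eligible forgery target, which is an $\LAEUF$ win. This produces the term $\Adv_{\CR}^{\LAEUF}(\mathcal A)$, with $\mathcal A$ the scheduler of the execution.

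I expect this last step to be the main obstacle. It is the accounting that converts the case "cell consumed by someone else" into an eligible forgery, checking freshness at $\nu_{\mathsf{acc}}^\star$ and the corruption ordering. The remaining claims are direct counting. The construction emits one commit and one reveal, so it emits at most two events. The elapsed time is $d_i-t_{\mathsf{req}}+\delta_{\mathsf{slot}}+\Delta_{\mathsf{inc}}+\Delta_{\mathsf{fin}}\le D_{\mathsf{com}}+\delta_{\mathsf{slot}}+\Delta_{\mathsf{inc}}+\Delta_{\mathsf{fin}}$, using $d_i=t_i^{\mathsf{open}}+D_{\mathsf{com}}$ and $t_{\mathsf{req}}>t_i^{\mathsf{open}}$.

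For the final remark, drop the rank premise. Then $M$ adversarial candidates can fill $C_u$ ahead of the honest digest, so the process parks without revealing, and no receipt follows. This gives safety without completeness.
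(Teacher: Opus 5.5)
Your proposal is correct and follows essentially the same route as the paper's proof. Liveness finalizes the commitment before $d_i$, the admission-rank premise places $c_i$ in the frozen $C_u$, the post-close reveal passes all four checks and finalizes by $d_i+\delta_{\mathsf{slot}}+\Delta_{\mathsf{inc}}+\Delta_{\mathsf{fin}}$, serialization forces any earlier competing receipt for the cell to be fresh, and a union bound gives the $\mathsf{Miss}$ bound. Your added details (the close not yet processed at $t_{\mathsf{req}}$, freshness via the embedded nonce, and $d_i-t_{\mathsf{req}}<D_{\mathsf{com}}$) are consistent with it; note only that invoking liveness to make an adversarial competing reveal finalize goes beyond the stated liveness assumption, which covers honest events only, though the paper's proof takes the same step implicitly.
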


\subsection{Ledger QROM security}

The transition machine admits only $0\le i<N_{\mathsf{cell}}$. Hence
\[
 N_{\mathsf{life}}\le U_{\mathsf{life}}N_{\mathsf{cell}},
 \qquad
 K_{\mathsf{life}}\le M U_{\mathsf{life}}N_{\mathsf{cell}}.
\]
The cell and target caps are therefore protocol enforced. The oracle and
retained check caps below remain adversarial resource bounds.

Fix public deterministic caps $U_{\mathsf{life}},Q_0,Q_1$.  For each account
$a$, fix public quotas $N_a\le N_{\mathsf{cell}},K_a,\ell_a$ on its
protocol admissible protected cells, total eligible targets, and pairs retained
from opening checks up to the first fresh acceptance.
Set
\[
 N_{\mathsf{life}}=\sum_aN_a,\qquad
 K_{\mathsf{life}}=\sum_aK_a,\qquad
 \ell_{\mathsf{life}}=\sum_a\ell_a.
\]
In every execution under consideration, at most $U_{\mathsf{life}}$ honest
accounts are initialized, account $a$ respects its three quotas, and at most
$Q_b$ evaluations of $H_b$ occur for $b\in\{0,1\}$.  The query caps include
every direct, mediated, relevant honest, and evaluation used to check outputs.
These quotas indexed by account support the hybrid for the target account
below.  With only an undifferentiated global cap, one may set every quota for an
account to that cap,
which exposes the corresponding $U_{\mathsf{life}}$ factor rather than hiding
it.
By~\eqref{eq:admission-cap}, one may take $K_a=M N_a$ and hence
$K_{\mathsf{life}}=M N_{\mathsf{life}}$.  Put
\[
 \Qhead:=Q_0+Q_1.
\]
For every initialized honest account $a$, let $\mathcal B_a$ be the $\qPRFQRO$
hybrid reduction that uses its challenge for account $a$ and stops if $a$ is
corrupted before the first unrequested accepting transition under that account.
Define the aggregate
multi-user loss
\[
\Delta_{\mathsf{PRF}}
 :=\sum_{a=1}^{U_{\mathsf{life}}}
 \Adv_F^{\qPRFQRO}(\mathcal B_a).
\]
Here $\qPRFQRO$ is the PRF experiment in which the distinguisher also has
quantum access to independent random oracles $H_0$ and $H_1$, independent of
its challenge oracle. Under a multi-key $\qPRFQRO$ convention,
$\Delta_{\mathsf{PRF}}$ is the corresponding single multi-key advantage.
Each $\mathcal B_a$ makes at most $N_a+1$ challenge oracle derivations, inherits
the caps $Q_0,Q_1$ on its auxiliary QRO calls, and runs in the time of
$\mathcal A$ plus polynomial simulation overhead.

\begin{lemma}[CCR exposure through both oracles]
\label{lem:head-exposure}
Work in the hybrid with random secrets.  Let at most $N$ protected cells have
independent secrets
\[
 s_u\leftarrow\bits^{\lambda_s},
 \qquad
 h_u=H_0\bigl(\mathsf{Enc}(\mathtt{head};\ctx_u,s_u)\bigr),
\]
where the head prefixes are distinct and $H_0,H_1$ are independent random
oracles.  For each cell $u$, let $\Theta_u$ be its close.  Before $\Theta_u$,
assume that the public values depending on $s_u$ consist only of the public
head, outputs of $H_1$ on prescribed honest commitment inputs, and responses
produced by charged $H_0$ and $H_1$ evaluations.  There is no other leakage or
uncharged predicate involving $s_u$.  Prescribed input encodings are distinct
and efficiently recognizable by the simulator from their fields other than a
parsed candidate coordinate for the secret.  Let
$\mathcal R_u$ be the value of $\Requested[a]$ at $\Theta_u$.  Every prescribed
honest commitment action belongs to $\mathcal R_u$.

Use the routed simulation constructed below, and let $D_{P,u}$ be the standard
compressed database of its marked slice oracle $P_u$. For a cut branch $z$,
define
\[
 B_{u,z}(D_{P,u})=1
\]
if there is a $\varphi\in\operatorname{dom}(D_{P,u})$ that is a well formed
opening description for $u$ and whose extracted action lies outside the frozen
set $\mathcal R_u$. Let $\Pi_u^P$ be the corresponding branch controlled
projector and write
\[
 \widehat\beta_u
 =\|\Pi_u^P|\Psi_{\Theta_u}\rangle\|.
\]
Let $R$ bound the complete sequence of charged calls that can carry a candidate
secret. This includes every direct, mediated, honest, and final check
evaluation of $H_0$ or $H_1$ that can carry such a coordinate. Then there are
universal constants $c_{\mathsf{eq}},c_{\mathsf{alt}}>0$ such that
\begin{align}
 \sum_u\widehat\beta_u^2
 &\le
 c_{\mathsf{eq}}
 \frac{N(R+1)^2}{2^{\lambda_s}},
 \label{eq:head-exposure-eq}\\
 \Pr[\mathsf{AltHead}_R]
 &\le
 c_{\mathsf{alt}}
 \frac{N(R+1)^2}{2^{\lambda_h}},
 \label{eq:head-exposure-alt}
\end{align}
where $\mathsf{AltHead}_R$ is the event that a retained or output candidate
$(u,s)$ with $s\ne s_u$ is classically checked within the charged sequence and
satisfies
$H_0\bigl(\mathsf{Enc}(\mathtt{head};\ctx_u,s)\bigr)=h_u$. The final check is
included. The second bound remains valid after disclosure of the planted
secrets. It bounds a found and checked alternative, not the
information theoretic existence of other preimages.
\end{lemma}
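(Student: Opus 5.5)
We plan to construct the routed simulation and then reduce both bounds to standard compressed-oracle search bounds.

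\textbf{Routed simulation.} For each cell $u$, we replace the planted secret $s_u$ by a marked slice. The simulator samples a fresh random oracle $P_u$ on $\bits^{\lambda_s}$ together with a secret $s_u$, and intercepts every charged call to $H_0$ or $H_1$. A call is rerouted when its input parses as $\mathsf{Enc}(\mathtt{head};\ctx_u,s)$ or as $\mathsf{Enc}_u(\varphi,s)$. By the stated recognizability, this parse uses only the fields other than the secret coordinate. On such a call the simulator applies a controlled query that records the pair (description, candidate $s$) in the compressed database of $P_u$ exactly when $s=s_u$. Off the slice the call goes to $H_0$ or $H_1$ unchanged. The honest commitment inputs and the head are prescribed, so their encodings are distinct and can be programmed consistently.

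This simulation is a perfect purification of the real oracles. We will check this by arguing that the joint distribution of $(H_0,H_1)$ together with the independent uniform $s_u$ is unchanged. The argument relies on the no-leakage hypothesis: every public value depending on $s_u$ is either the head, a prescribed $H_1$ output, or a charged response. With this in place, the oracle-free maps intertwine the projector $\Pi_u^P$, as the abstract form of Lemma~\ref{lem:cut} requires.

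\textbf{First bound \eqref{eq:head-exposure-eq}.} The predicate $B_{u,z}$ marks a database entry of the form $(\varphi,s_u)$ whose action $\varphi$ lies outside $\mathcal R_u$. Honest prescribed entries are excluded, since their actions lie in $\mathcal R_u$. Such an entry can therefore arise only from a non-prescribed charged query whose secret coordinate equals $s_u$. Before any such hit occurs, the adversary's view of $s_u$ consists only of the head and the prescribed commitment digests. Both are random-oracle outputs, so in the compressed picture $s_u$ stays uniform on the unmarked branch. Each charged call therefore transfers amplitude at most $O(\sqrt{1/2^{\lambda_s}})$ into the marked subspace for cell $u$, by the transition capacity theorem applied to a single-point target.

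A single query can carry candidates for several cells in superposition, so the bound must be taken jointly. We will sum the squared transfers across cells using orthogonality of the cell-indexed slices: one unit query distributes norm over the $N$ cells, and in the worst case each cell receives transfer $1/2^{\lambda_s}$. Telescoping over the $R$ calls, plus one extra step for the final check, gives amplitude $(R+1)/\sqrt{2^{\lambda_s}}$ per cell. This yields $\widehat\beta_u^2 \le c(R+1)^2/2^{\lambda_s}$ for each $u$, and summing over the $N$ cells gives the $N$ factor.

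\textbf{Second bound \eqref{eq:head-exposure-alt}.} $\mathsf{AltHead}_R$ is a fixed-target preimage search in $H_0$: the target is $h_u$, and the inputs are restricted to those with a different secret coordinate. The head $h_u$ is fixed at registration. Inputs $\mathsf{Enc}(\mathtt{head};\ctx_u,s)$ with $s\ne s_u$ are disjoint from the planted point, so their $H_0$ values are uniform and independent of $h_u$ even after $s_u$ is disclosed. Disclosure reveals only the planted preimage and therefore does not help.

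We apply Lemma~\ref{lem:cut} with $K=1$ per cell, taking the cut at registration and setting $\Fresh(x)$ to mean ``secret coordinate $\ne s_u$''. No fresh input can hit at the cut, so $\beta_{\mathsf{cut}}=0$. This gives $c\,(R+1)^2/2^{\lambda_h}$ per cell, with $\ell=1$ absorbed into the $(R+1)^2$ factor. A union over the $N$ cells yields \eqref{eq:head-exposure-alt}.

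\textbf{Main obstacle.} The main obstacle is proving that the routed simulation is exact and that the marked-slice database is well defined when head and commitment queries for many cells are superposed within a single call. The same care is needed to obtain the summed bound $\sum_u\widehat\beta_u^2$, rather than a union bound that loses a factor of $N^2$. Our first attempt will treat the $N$ slices as one product compressed oracle over a disjoint union domain. We would then apply the capacity bound to the union relation $\bigvee_u B_u$, where the target density is $1/2^{\lambda_s}$ on each slice. Finally, we would recover the per-cell sum from orthogonality of the branch-controlled projectors.
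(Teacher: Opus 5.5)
There is a genuine gap in your proof of \eqref{eq:head-exposure-eq}. You telescope per-call transfers into the range of $\Pi_u^P$, claiming that each charged call moves at most $O(2^{-\lambda_s/2})$ because $s_u$ ``stays uniform on the unmarked branch.'' You justify this by transition capacity, which does not apply here. The predicate $B_{u,z}$ depends only on $\operatorname{dom}(D_{P,u})$, and a routed $P_u$ query on a fresh non-requested $\varphi$ can create a hit with amplitude one. All smallness must therefore come from the routing condition $s=s_u$ on the \emph{input}, not from a uniform \emph{output} landing in a small set.

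More seriously, the per-call claim is false. Your description of the adversary's view drops the third item of the hypothesis, the responses of charged $H_0$ and $H_1$ evaluations. Head queries on $\mathsf{Enc}(\mathtt{head};\ctx_u,s)$, compared with the public $h_u$, let the adversary run Grover search for $s_u$ while $D_{P,u}$ acquires no new entry. After $k$ iterations, a single $H_1$ carrier query on a fresh $\varphi'$ moves amplitude about $(2k+1)2^{-\lambda_s/2}$ into $\Pi_u^P$ in one step, for essentially every value of $s_u$. So no per-call bound on the $\Pi_u^P$ component holds, even on average over $s_u$. This $H_0$-assisted search followed by an $H_1$ carrier query is exactly what the lemma exists to charge. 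Your telescoping reaches the right number without justifying it.

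The missing idea is to measure progress by distance to a hybrid rather than through $\Pi_u^P$, as the paper does. Replace every routing call to $E_u(s)=\mathbf 1[s=s_u]$ by the all-zero oracle. In that circuit the head is an independent random string and only prescribed honest calls reach $P_u$, so $\Pi_u^P$ annihilates the state. Hence $\widehat\beta_u\le\bigl\||\Psi^{s_u}_{\Theta_u}\rangle-|\Psi^{0}_{\Theta_u}\rangle\bigr\|$. The BBBV hybrid bound over the $T\le2R$ equality calls (computing and uncomputing the routing bit, for both $H_0$ and $H_1$) gives $\mathbb E_{s_u}\|\cdot\|^2\le4T^2/2^{\lambda_s}$. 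The adaptive close is handled by padding to the public cap.

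Two points about where the hypotheses enter:
\begin{enumerate}
\item The no-leakage hypothesis is what makes the zero-oracle execution independent of $s_u$.
\item Exactness of the routing is a separate fact, a basis permutation of truth tables for each fixed $s_u$. It holds once $P_u$ is indexed by secret-deleted descriptions $\varphi$ and honest commitments are evaluated through the same $P_u$ rather than programmed.
\end{enumerate}
The factor $N$ comes from fixing one cell, simulating the others locally, and summing. The joint product-oracle treatment you flag as the main obstacle is therefore unnecessary.

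Your argument for \eqref{eq:head-exposure-alt} is fine in substance and matches the paper's appeal to quadratic search for the random target $h_u$. Two details need correcting. First, $\beta_{\mathsf{cut}}$ is not literally $0$: slice queries made before $h_u$ is assigned can collide with it, and for $i\ge1$ the head is fixed by the honest successor computation rather than at registration. Second, $\ell$ can be as large as $R$, not $1$. Both corrections are absorbed into $c_{\mathsf{alt}}(R+1)^2/2^{\lambda_h}$.
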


\begin{proof}
Fix a cell $u$. Route its planted head slice and its well formed commitment
slice through an equality oracle
$E_u(s)=\mathbf 1[s=s_u]$. On the marked slice, $H_0$ returns the sampled
public head and $H_1$ calls an independent random oracle $P_u$ on the unique
secret deleted description $\varphi$; off the slice, independent random
oracles provide the answers. For fixed $s_u$, this routing is a basis
permutation of independent truth tables, so the visible interfaces are exactly
independent random oracles. Prescribed honest calls and routed adversarial
calls use the same $P_u$, including when their inputs overlap.

Compare the routed circuit with the circuit in which $E_u$ is replaced by the
all zero oracle. In the latter, every entry of the standard compressed
database $D_{P,u}$ comes from a prescribed honest commitment and therefore has
an action in $\mathcal R_u$. Hence $\Pi_u^P$ kills the zero oracle state. The
BBBV hybrid bound~\cite{bennett1997strengths}, applied to the at most $2R$
equality calls needed to compute and uncompute the routing bit, gives
\[
 \widehat\beta_u^2
 \le O\left(\frac{(R+1)^2}{2^{\lambda_s}}\right).
\]
Retaining the adaptive close flag and padding each branch to the public cap
makes this an ordinary fixed length hybrid. Summing over cells proves
\eqref{eq:head-exposure-eq}.

After the cut, only a controlled $P_u$ call can create a fresh database entry
whose value lies in the frozen set $C_u$. Its transition norm is at most
$\sqrt{c_{\mathsf{co}}|C_u|/2^{\lambda_c}}$; prescribed honest calls have zero
fresh transition capacity. An exact secret final check retains the tested
input and output, so the output recording bound applies. Finally, outside the
planted point, the values of $H_0$ remain independent and uniform even after
$s_u$ is disclosed. Standard quantum search, including the final classical
check, therefore proves~\eqref{eq:head-exposure-alt}. Full routing and
compressed database details appear in Appendix~\ref{app:ccr-exposure}.
\end{proof}

\begin{theorem}[$\LQROM$ security of $\CR$]\label{thm:ccr}
For every $\LAEUF$ adversary $\mathcal A$, there are $\qPRFQRO$ adversaries
$\{\mathcal B_a\}_a$ and universal constants $c_h,c_c$ such that
\begin{align*}
 \Adv_{\CR}^{\LAEUF}(\mathcal A)
 \le{}&
 \Delta_{\mathsf{PRF}}
 +\varepsilon_{\mathsf{fin}}
 +\varepsilon_{\mathsf{can}}
 +\varepsilon_{\mathsf{state}} \\
 &+c_h\frac{N_{\mathsf{life}}(\Qhead+1)^2}
 {2^{\min\{\lambda_s,\lambda_h\}}} \\
 &+c_c\frac{K_{\mathsf{life}}Q_1^2}
 {2^{\lambda_c}}
 +\frac{6\ell_{\mathsf{life}}}{2^{\lambda_c}}.
\end{align*}
One may take $c_c=3c_{\mathsf{co}}$.  The safety statement assumes no
honest inclusion bound.
\end{theorem}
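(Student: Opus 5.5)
The proof is a sequence of game hops, each contributing one additive term of the bound.

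\emph{Ledger errors.} First condition away the ledger-level failures. Outside $\Bad_{\mathsf{fin}}$, finalized prefixes are monotone. Outside canonicalization ambiguity, equal canonical actions induce equal transitions. Outside state-execution deviation, every finalized $\mathcal L_{\CR}$ receipt for a reveal certifies that the four acceptance conditions actually held at the acceptance step. This costs $\varepsilon_{\mathsf{fin}}+\varepsilon_{\mathsf{can}}+\varepsilon_{\mathsf{state}}$, exactly as in the proof of Theorem~\ref{thm:sig-embed}.

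\emph{PRF hybrid.} For each account $a$ in turn, replace $F_{k_a}$ by a truly random function via $\mathcal B_a$. The reduction simulates $H_0,H_1$ with its auxiliary quantum oracles, and it stops if $a$ is corrupted before the first unrequested acceptance. Secrets of corrupted accounts are irrelevant, because an eligible forgery requires $\nu_{\mathsf{cor}}[a]>\nu^\star_{\mathsf{acc}}$. Summing over accounts gives $\Delta_{\mathsf{PRF}}$. In the resulting game, each protected cell has an independent uniform $s_u$, and the head prefixes are distinct by injectivity of $\mathsf{Enc}$ and distinctness of $\ctx_u$.

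\emph{Decomposing a win.} Fix the lexicographically first eligible target, accepted in cell $u$ of an honest uncorrupted account with an action $\tau'\notin\Requested[a]$ at acceptance. Its reveal $(u,\tau',s,r)$ passed the head check and the commitment check, with $H_1(\mathsf{Enc}_u(\varphi,s))\in C_u$. The reveal is included after $d_i$, so the check follows the close $\Theta_u$. Split into two cases. If $s\ne s_u$, the event $\mathsf{AltHead}_R$ occurs. If $s=s_u$, then either the opening description $\varphi$, whose action lies outside $\mathcal R_u$, was already recorded in $D_{P,u}$ at the cut, or it is a fresh entry created after the cut whose value lands in the frozen $C_u$. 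Here I use that $\Requested$ only grows, and that serialization guarantees that every honest commitment action for $u$ is fixed before $\Theta_u$. The first two cases are bounded by Lemma~\ref{lem:head-exposure}, taking $R\le\Qhead$. Since $\widehat\beta_u$ is an amplitude, the recorded-at-cut case enters as $\widehat\beta_u^2$ inside the factor $3$ of Lemma~\ref{lem:cut}. This gives $c_h N_{\mathsf{life}}(\Qhead+1)^2/2^{\min\{\lambda_s,\lambda_h\}}$ with $c_h=\max(3c_{\mathsf{eq}},c_{\mathsf{alt}})$.

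\emph{Post-cut search.} The remaining case applies Lemma~\ref{lem:cut} in its abstract-purification form to the marked slice oracle $P_u$. The target set is $C=C_u$, of size at most $M$. $\Fresh$ means ``well formed opening for $u$ with action outside the frozen $\mathcal R_u$''; this is oracle independent and fixed at the cut, with $\omega$ holding $s_u$. The bad norm is $\widehat\beta_u$, and the post-cut evaluations are at most $Q_1$. Closure of the cut is exactly the CCR rule that $C_u$ is copied after slot $d_i$ and frozen, so no later digest becomes eligible. Corollary~\ref{cor:lifetime} then sums over cells, giving $3c_{\mathsf{co}}K_{\mathsf{life}}Q_1^2/2^{\lambda_c}+6\ell_{\mathsf{life}}/2^{\lambda_c}$ plus $3\sum_u\widehat\beta_u^2$, which is already absorbed in the previous step.

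\emph{Main obstacle.} The hardest step is showing that the hypotheses of Lemma~\ref{lem:head-exposure} hold: before $\Theta_u$, the only $s_u$-dependent public values are $h_u$, the honest commitment digest, and charged oracle responses. This holds because the honest process reveals $s_u$ only after observing the close and only if $c_i\in C_u$. A censored commitment parks the account without revealing, and a parked or exhausted cell never reopens an interval under the exposed head. I would verify this against the $\Authorize$ state machine, case by case across park, miss-close and censored-reveal. I would also check that using each cell's cut $\Theta_u$ separately, as a per-branch stopping time padded to the public caps, is consistent with a single global query budget, which is what lets the quota-indexed sums replace $K$ and $\ell$.
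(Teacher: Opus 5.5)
Your treatment of the cryptographic core follows the paper's route. You abort on the three ledger failures and split the selected acceptance on $s^\star=s_u$. You charge the alternative case to $\mathsf{AltHead}$ and dominate the exact-secret bad norm at $\Theta_u=d_i$ by $\widehat\beta_u$ from Lemma~\ref{lem:head-exposure} with $R=\Qhead$. You then finish with Lemma~\ref{lem:cut} and Corollary~\ref{cor:lifetime} on the marked slice $P_u$ with budget $Q_1$.

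The step that fails as written is the PRF hybrid. You replace the PRFs ``in turn'' and then analyze one resulting game in which every protected cell has a uniform secret. Under adaptive corruption this chain breaks, for two reasons. First, in the step for account $a$, the reduction $\mathcal B_a$ must answer corruption of every earlier account $a'$ whose outputs are already random. It has no key $k_{a'}$ consistent with the heads and secrets already published. Second, since $\mathcal B_a$ stops when $a$ is corrupted, that step only controls events decided before $a$'s corruption, not a later forgery under a different account. Your remark that secrets of corrupted accounts are irrelevant concerns only the forging account, not this simulatability problem. A related issue: the winning condition allows corrupting the forging account after $\nu_{\mathsf{acc}}^\star$ but before finalization. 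So the reduction cannot wait for the lexicographically first finalized target; it must decide at the acceptance step.

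The missing idea is the paper's partition. Pass from the win to the event $E_{\mathsf{acc}}$ that some uncorrupted honest account has an unrequested accepting transition; every win implies it, and it is decided at acceptance. Partition $E_{\mathsf{acc}}$ into events $W_a$, where $W_a$ says the first such transition is under account $a$. For each $a$, use a separate unconditioned hybrid $G_2^{(a)}$ that replaces only account $a$'s outputs and keeps real keys for all other accounts, so their corruptions remain answerable. The hybrid stops at corruption of $a$ or resolution of $W_a$. This gives $\Pr_{G_1}[E_{\mathsf{acc}}]=\sum_a\Pr_{G_1}[W_a]\le\sum_a\Pr_{G_2^{(a)}}[W_a]+\Delta_{\mathsf{PRF}}$.

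Each $\Pr_{G_2^{(a)}}[W_a]$ is then bounded using only account $a$'s cells and checks, with quotas $N_a,K_a,\ell_a$ and the global caps $Q_0,Q_1$, and the results are summed. This is where the per-account quotas are indispensable. The bounds live in distinct probability spaces, so only deterministic quotas can be added to reach $N_{\mathsf{life}},K_{\mathsf{life}},\ell_{\mathsf{life}}$. With only global caps, each hybrid would contribute the full lifetime count and you would lose a factor $U_{\mathsf{life}}$.

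Separately, your constant is off. The two head terms add, so $c_h=c_{\mathsf{alt}}+3c_{\mathsf{eq}}$. Taking $\max(3c_{\mathsf{eq}},c_{\mathsf{alt}})$ undercounts by up to a factor two when $\lambda_s=\lambda_h$.
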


Full proofs for the $\LQROM$ security of the commit, close, reveal (CCR) 
authenticator are given in Appendix \ref{app:ccr-proofs}.
\begin{proof}[Proof sketch of Theorem~\ref{thm:ccr}]
Let $E_{\mathsf{acc}}$ be the event that an uncorrupted honest account has an
unrequested accepting transition, and let $W_a$ select the account of the first
such transition. Every $\LAEUF$ win implies $E_{\mathsf{acc}}$. For each $a$,
use a separate unconditioned hybrid replacing that account's PRF outputs by
independent uniform strings up to its corruption. Bound $W_a$ without
conditioning the state or random oracles on that event, and then sum over $a$.
Abort on finality, canonicalization, or state execution failure. The public
per-account quotas sum to the displayed lifetime aggregates. Fix the selected
transition and its cell $u$.

At the close $d_i$, the finalized transcript fixes $C_u$, and every later
acceptance relevant eligible digest lies in that set. Define a fresh opening
input to contain the context $u$, a canonical action outside the frozen set $\mathcal R_u$, a secret satisfying the registered head, and an unauthorized commitment opening.

Split the selected accepting reveal according to whether its secret is the planted
$s_u$.  If it uses $s\ne s_u$, then
\[
 H_0\bigl(\mathsf{Enc}(\mathtt{head};\ctx_u,s)\bigr)=h_i
\]
is an alternative head preimage.  Equation~\eqref{eq:head-exposure-alt} with
$R=\Qhead$ bounds this case by
\[
 c_{\mathsf{alt}}
 \frac{N_{\mathsf{life}}(\Qhead+1)^2}{2^{\lambda_h}}.
\]
For openings using the exact secret, use the marked slice purification
from Lemma~\ref{lem:head-exposure} and let $\beta_u$ be its bad norm at the cut
for the finalized target. Its projector additionally requires the $H_1$ output to lie in $C_u$, so it is
dominated by $\Pi_u^P$ in Lemma~\ref{lem:head-exposure}. Hence
\[
 \sum_u\beta_u^2
 \le c_{\mathsf{eq}}
 \frac{N_{\mathsf{life}}(\Qhead+1)^2}{2^{\lambda_s}}.
\]
This joint accounting covers both a blind $H_1$ carrier query and a search
assisted by $H_0$ followed by an $H_1$ carrier query.

On the complementary subspace, the adversary may retain all quantum state from
before the cut and may later learn $s_i$ from the honest pending reveal. It
cannot make a digest outside $C_u$ eligible. A fresh accepted action therefore
supplies a new structured
input whose $H_1$ image lies in one of the lifetime finalized target sets.
Lemma~\ref{lem:cut} and Corollary~\ref{cor:lifetime} give the commitment term
and the term for checking outputs.  Adversarial ordering may select which valid reveal runs
first, but it cannot create a target after closure.  Censorship may stop every
reveal and therefore affects completeness, not this case split.  Combining the
alternative case and the case using the exact secret gives the theorem with
$c_h=c_{\mathsf{alt}}+3c_{\mathsf{eq}}$ and
$c_c=3c_{\mathsf{co}}$.
\end{proof}

\begin{corollary}[Capped lifetime targets]\label{cor:ccr-cap}
For every execution containing at most $N_{\mathsf{life}}$ protected cells,
\[
 \sum_u|C_u|\le M N_{\mathsf{life}}.
\]
Thus one may take $K_{\mathsf{life}}=M N_{\mathsf{life}}$ in
Theorem~\ref{thm:ccr}, and its finalized target contribution is at most
\[
 c_c\frac{M N_{\mathsf{life}}Q_1^2}{2^{\lambda_c}}.
\]
Filling all $M$ positions may park the cell, but does not enlarge the
accepting target set after disclosure.
\end{corollary}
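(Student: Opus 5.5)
The first display follows from a counting argument on the frozen admission sets, and the second is a substitution into Theorem~\ref{thm:ccr}.

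\emph{Per-cell bound.} Fix a protected cell $u$. By construction, $C_u$ is the set of the first $M$ admission candidates in the canonical order (finalization slot, then canonical position, then digest), or all candidates if fewer than $M$ appear. It is copied once, after all finalizations of slot $d_i$ are processed, and is frozen from then on. So on every branch $|C_u|\le M$, however many commit events the adversary submits and however it orders or censors them. I will check three points explicitly:
\begin{itemize}
  \item each digest is counted once, because candidates are distinct digests ordered by their first finalized appearance;
  \item monotone finality, outside $\Bad_{\mathsf{fin}}$, prevents a later reorganization from replacing or enlarging the frozen copy;
  \item the transition machine never consults anything other than $C_u$ for condition~4 of reveal acceptance.
\end{itemize}
The third point is what makes $C_u$ the true eligibility set and not merely a copied register, matching the closure requirement of Definition~\ref{def:cut}.

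\emph{Summing over cells.} An execution contains at most $N_{\mathsf{life}}$ protected cells. Summing the per-cell bound gives $\sum_u|C_u|\le M N_{\mathsf{life}}$, which is \eqref{eq:admission-cap}. Cells that are opened but parked still contribute at most $M$ each, so they are covered. Cells never opened have no candidates. The main obstacle is this counting step, specifically making sure no cell is double-counted across reorganizations and that nothing eligible escapes the count. I would handle it by indexing cells by $(a,e,i)$. Each such index is unique because contexts are distinct and the cell cap $i<N_{\mathsf{cell}}$ is enforced, and each cell is opened only at the first finalized appearance of its predecessor's receipt.

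\emph{Substitution.} With the sum bounded, the per-account quota $K_a=MN_a$ is a valid protocol-enforced choice, and summing over accounts gives $K_{\mathsf{life}}=MN_{\mathsf{life}}$. Substituting this into the commitment term $c_cK_{\mathsf{life}}Q_1^2/2^{\lambda_c}$ of Theorem~\ref{thm:ccr} yields the displayed contribution $c_cMN_{\mathsf{life}}Q_1^2/2^{\lambda_c}$. The other terms of the theorem are unchanged.

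\emph{The parking remark.} An adversary that fills all $M$ positions before the honest digest finalizes causes the honest signer to observe $c_i\notin C_u$ and park without revealing $s_i$. The target set is still frozen at size at most $M$, so the adversary gains no extra target for the post-disclosure search of Lemma~\ref{lem:cut}. It only defeats the admission-rank premise of Proposition~\ref{prop:ccr-complete}, so the effect is on liveness, not safety.
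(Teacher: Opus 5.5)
Your proposal is correct and follows the paper's route. The paper reads the per-cell cap $|C_u|\le M$ directly off the construction of the frozen admission set, sums it over the at most $N_{\mathsf{life}}$ protected cells to get \eqref{eq:admission-cap}, and substitutes $K_a=MN_a$ (hence $K_{\mathsf{life}}=MN_{\mathsf{life}}$) into the commitment term of Theorem~\ref{thm:ccr}; your extra checks on monotone finality and double counting are harmless but unnecessary, because the copied register has at most $M$ entries on every branch by definition and eligibility after $d_i$ is exactly membership in it, so the inequality holds for every execution without excluding $\Bad_{\mathsf{fin}}$.
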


The theorem explains why the commitment term admits a target preimage bound
only in this construction.  Before disclosure, every fresh valid opening also
inverts the registered head and is charged to the bad term at the cut.  After
disclosure, all eligible commitment outputs are fixed.  Without that first
case, a generic collision found before the cut would require collision
analysis.

\begin{corollary}[Lifetime parameter sizing]\label{cor:ccr-parameters}
Write $\lambda_H=\min\{\lambda_s,\lambda_h\}$ and
\[
 \Delta_0=\Delta_{\mathsf{PRF}}+\varepsilon_{\mathsf{fin}}
 +\varepsilon_{\mathsf{can}}+\varepsilon_{\mathsf{state}}.
\]
Assume the deterministic caps
\[
\begin{aligned}
 N_{\mathsf{life}}&\le N,
 &K_{\mathsf{life}}&\le MN,\\
 \Qhead&\le q_h,
 &Q_1&\le q_c,
 &\ell_{\mathsf{life}}&\le L.
\end{aligned}
\]
For any $\epsilon_h,\epsilon_c>0$, it is sufficient to choose
\begin{align*}
 \lambda_H&\ge
 \left\lceil\log_2\!\left(
   \frac{c_hN(q_h+1)^2}{\epsilon_h}
 \right)\right\rceil,\\
 \lambda_c&\ge
 \left\lceil\log_2\!\left(
   \frac{c_cMNq_c^2+6L}{\epsilon_c}
 \right)\right\rceil
\end{align*}
to obtain
\[
 \Adv_{\CR}^{\LAEUF}(\mathcal A)
 \le \Delta_0+\epsilon_h+\epsilon_c.
\]
\end{corollary}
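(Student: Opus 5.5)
The plan is to substitute the deterministic caps into Theorem~\ref{thm:ccr} and then show that each of the two residual lines is at most $\epsilon_h$ and $\epsilon_c$. The first step is monotonicity. The bound of Theorem~\ref{thm:ccr} is nondecreasing in $N_{\mathsf{life}}$, $K_{\mathsf{life}}$, $\Qhead$, $Q_1$ and $\ell_{\mathsf{life}}$. So in every execution under consideration we may replace these quantities by $N$, $MN$, $q_h$, $q_c$ and $L$. For the $K_{\mathsf{life}}\le MN$ cap we rely on Corollary~\ref{cor:ccr-cap}, which makes it protocol enforced, or else we take it as an assumption. The additive terms $\Delta_{\mathsf{PRF}}+\varepsilon_{\mathsf{fin}}+\varepsilon_{\mathsf{can}}+\varepsilon_{\mathsf{state}}$ pass through unchanged and give $\Delta_0$.

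The second step handles the head term. Since $\lambda_H=\min\{\lambda_s,\lambda_h\}$ is exactly the exponent in Theorem~\ref{thm:ccr}, we have
\[
 c_h\frac{N_{\mathsf{life}}(\Qhead+1)^2}{2^{\lambda_H}}\le c_h\frac{N(q_h+1)^2}{2^{\lambda_H}}.
\]
The stated choice of $\lambda_H$ gives $2^{\lambda_H}\ge c_hN(q_h+1)^2/\epsilon_h$, because $\lceil x\rceil\ge x$ and $2^{(\cdot)}$ is monotone. Hence this term is at most $\epsilon_h$.

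The third step handles the commitment term. The two $\lambda_c$-terms share the denominator $2^{\lambda_c}$, so they combine to at most $(c_cMNq_c^2+6L)/2^{\lambda_c}$. The choice of $\lambda_c$ then makes this at most $\epsilon_c$. Adding the three pieces gives $\Delta_0+\epsilon_h+\epsilon_c$.

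The only point needing care is that the caps hold deterministically in every execution, not merely in expectation. This is what allows a pointwise substitution into a bound that was proved under those same resource caps. One should also check that $\Qhead$ is the same sum $Q_0+Q_1$ that is capped by $q_h$, so that no hidden $U_{\mathsf{life}}$ factor is omitted. I expect no real obstacle beyond this bookkeeping.
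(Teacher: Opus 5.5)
Your proposal is correct and follows the paper's proof, which simply substitutes the public caps and $K_{\mathsf{life}}\le MN$ into Theorem~\ref{thm:ccr}. Your monotonicity, ceiling, and common-denominator steps just spell out that one-line substitution explicitly.
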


\begin{proof}
Substitute the public caps and $K_{\mathsf{life}}\le MN$ into
Theorem~\ref{thm:ccr}.
\end{proof}

The PRF key length $\kappa$ and suite must separately make
$\Delta_{\mathsf{PRF}}$ smaller than the deployment's PRF failure budget under
the stated user and query caps. The value $\lambda_r$ controls commitment
randomization but contributes no independent security exponent to the theorem;
it may be zero when deterministic commitments are acceptable.

Ignoring universal constants and the residual from recording outputs, a $w$-bit
quantum work target requires approximately
\[
 \lambda_H\ge 2w+\log_2N,
 \qquad
 \lambda_c\ge 2w+\log_2K_{\mathsf{life}}.
\]

\section{Design and proof implications}\label{sec:implications}

The framework exposes design choices that are hidden when every mechanism is
called a ``signature.'' Table~\ref{tab:design-map} summarizes the main points
on the spectrum. Additional design and deployment detail appears in
Appendix~\ref{app:deployment-detail}.

\paragraph{Finality is part of the cryptographic API.}

A protocol must wait under the same finality predicate used in its theorem.
Revealing after mere inclusion may expose a credential on a fork whose
precursor disappears. The finality rule, chain and fork identifiers, close
duration, and canonicalization policy therefore belong to the authentication
domain. The additive $\varepsilon_{\mathsf{fin}}$ permits composition with a
concrete consensus bound without assuming deterministic irreversibility.

\paragraph{Admission policy affects the cryptographic bound.}

Before an opening, a commitment that appears random is normally
unauthenticated. Fees, deposits, authenticated fee payers, per-cell caps, and
accumulators therefore affect the cryptographic bound by determining $K$ and
the verifier's query budget. A full cap may deny progress, but it still
supports safety when every admitted target is fixed before disclosure and
charged in the loss.

\paragraph{Portability is the exchanged resource.}

A portable signature can be verified from a small tuple outside the ledger.
A transcript authenticator instead requires finalized history or authenticated
proofs for its precursor and receipt. It exchanges portability and latency for
evidence maintained by the consensus system.

\paragraph{Limitations and deployment considerations.}\label{sec:limitations}

The $\LQROM$ keeps ledger calls classical, and finality does not erase quantum
advice. Finalized registration, independent recovery authority, and serialized
wallet state are deployment premises; Appendix~\ref{app:deployment-detail}
gives the full limitations.

\section{Conclusion}\label{sec:conclusion}

Ledger authorization can be transcript dependent, finalized commitments,
nonce history, and event order affect verification. $\LAEUF$ models scheduling
and censorship while separating safety from progress. Our single event conditions yield a contextual one-time signature. In rebindable reveals with otherwise public admission, safety requires finalized closure before disclosure. In the QROM, finality is a classical proof cut, not a reset; quantum state persists and the loss scales with the target set. Independent gates rely on their own assumptions.

\appendix
\section*{APPENDIX}
\begin{table*}[t]
\centering
\scriptsize
\setlength{\tabcolsep}{3.2pt}
\caption{Ledger authentication design points.  Events count authorization
events after finalized registration.  Authentication material excludes the
canonical action and common transaction fields.  Sizes are symbolic because
encodings and parameter sets differ.}
\label{tab:design-map}
\begin{tabularx}{\textwidth}{
 @{}l c
 >{\raggedright\arraybackslash}X
 >{\raggedright\arraybackslash}X
 c
 >{\raggedright\arraybackslash}X@{}}
\toprule
Mechanism & Events & Material for each action
& Public state and validation & Portable
& Censorship and proof status \\
\midrule
ML-DSA or SLH-DSA
& \(1\)
& Signature \(\sigma_\Sigma\)
& Registered verification key, replay state, and one
  \(\mathsf{Vf}_\Sigma\) invocation
& Yes
& Censorship affects progress, not unforgeability under multi-user
  \(\PQEUF\) \cite{nistfips204,nistfips205}. \\

XMSS or LMS
& \(1\)
& Stateful hash-based signature, including its index and authentication data
& Registered root, replay state, and tree plus hash checks for a one-time signature
& Yes
& Safety also requires persistent index discipline at the signer
  \cite{hulsing2018xmss,mcgrew2019lms,nist800208}. \\

Fawkescoin
& Several
& Hash commitment, disclosure, and transfer data
& Finalized timeline and UTXO state with hash checks
& No
& Rules for earliest transfer and expiry convert the disclosure race from theft
  to freezing \cite{bonneau2014fawkescoin}. \\

HBAuth
& Depends on the batch
& Hash authorization associated with a metatransaction batch
& Contract state, hash checks, and rules favoring the preferred batcher
& No
& Censorship and admission depend on the batcher model.  No ledger QROM
  theorem is claimed here \cite{hughes2023hbauth}. \\

Finlow-Bates et al.
& \(2\)
& Fixed-length hash fields in commitment and reveal transactions
& Confirmed commitment and hash verification
& No
& Admission after disclosure must close for safety despite censorship.
  Formal analysis is left open in the proposal \cite{finlow2026cost}. \\

\(\CR\) (this work)
& \(2\)
& \(c_i\), reveal values \((s_i,r_i)\), and successor head \(h_{i+1}\)
& \(h_i,d_i,C_u\), with \(|C_u|\le M\), plus one \(H_0\) and one \(H_1\)
  reveal check
& No
& A full cap may park the cell.  Theorem~\ref{thm:ccr} gives lifetime
  \(\LQROM\) security under \(\LAEUF\) without an inclusion assumption. \\
\bottomrule
\end{tabularx}
\end{table*}

\section{Additional design and deployment detail}
\label{app:deployment-detail}

\paragraph{Finality as part of the API.}
A protocol must bind its waiting rule to the same finality predicate used in
the theorem. Revealing after mere inclusion can expose a credential on a fork
whose commitment is later removed. On the surviving history the adversary may
then create the first eligible evidence. Changing the finality rule, chain or
fork identifier, close duration, or canonicalization policy changes the
authentication domain and should therefore change $\ctx$. The additive finality term is useful for composition. A ledger authenticator proof can be reused with a bound specific to the consensus protocol on
$\varepsilon_{\mathsf{fin}}$, rather than silently assuming deterministic
irreversibility. Long-range rewrites remain outside a short range finality
claim and require whatever checkpointing or weak subjectivity assumption the
underlying ledger already needs.

\paragraph{Admission policy.}
Before an opening, a commitment that appears random is normally
unauthenticated. An attacker can therefore enlarge an eligible target set or
trigger validation unless the ledger charges another resource. Fees, deposits,
independently authenticated fee payers, caps for each cell, and storage using
an accumulator are systems mechanisms, but they have cryptographic
consequences as they determine the realized $K$ and the query budget for the
verifier in the finalized target lemma. A cap is not a liveness guarantee. An attacker that fills it can exclude the
honest commitment. It can nevertheless support safety where every admitted
target is fixed before disclosure and charged in the loss over all targets.
Conversely, calling validation ``free'' in a QROM proof while allowing
unlimited rejected events gives the adversary an unaccounted interface that
depends on the oracle.

\paragraph{Portability.}
A portable signature can be verified from a small tuple outside the ledger. A
transcript authenticator may require the finalized history or authenticated
proofs for both the precursor and final receipt. A bridge, court, offline
device, or another chain must import those facts before using $\Judge$. The
ledger construction has not compressed a signature in the ordinary sense. It
has exchanged portability and latency for evidence already maintained by the
consensus system.

\paragraph{Parameter interpretation.}
A 256 bit commitment output and $K_{\mathsf{life}}=2^{40}$ give a nominal
lifetime search exponent for the commitment of 108, not 128. The construction
carries a commitment, a revealed secret, a successor head inside the canonical
action, and optional opening randomness. Its attraction is verification using
only hashes. Its cost is latency from two events, bounded transcript state, and
evidence that is not portable.

\paragraph{Model and deployment limits.}
The $\LQROM$ models quantum computation and quantum access to the designated
hash functions, while events, receipts, and scheduler calls remain classical.
This captures the setting in which the cryptographic primitive may be queried
in superposition, but the ledger itself exposes a classical execution history.
A coherently queryable state transition or a ledger accepting superpositions
of transactions requires a different model. The finalized target rule also
requires a classical immutable target register and an explicit bad amplitude
at the cut. Finality does not erase arbitrary advice derived from oracle calls,
and the rule does not cover target sets that remain updateable after
disclosure. Consequently, the security argument applies only after the
eligible target set has become immutable; protocols that continue modifying
eligibility after credential disclosure fall outside this model. The experiment begins from finalized registration. Genesis, governance, an
existing credential, or a separately authenticated recovery path must install
the initial state. Safety despite censorship also permits a cell to freeze.
This separates safety from liveness, the model prevents unauthorized state
transitions even when an adversary can indefinitely delay honest progress.
Reopening a commitment interval after its credential may have leaked can
violate closure, so recovery needs independent registered authority, such as a
post-quantum recovery key, guardian policy, or precommitted epoch rotation. A
ledger nonce prevents two actions from both being accepted, but does not
prevent a signer from exposing two pending authenticators derived from
one-time state. Concrete wallets may therefore need serialization and
crash-consistent state management. These implementation requirements are
outside the ledger security game but are necessary for a real deployment to
preserve the one-time-use assumptions. Finally, the compact construction uses
random oracles. Alternatives in the standard model have different efficiency
and proof profiles.

\section{Full reductions for the structural results}
\label{app:structural-proofs}
This appendix makes explicit the games, stopping rules, and corruption
handling suppressed in the main text.  Write
\[
 \mathsf{Good}_{\mathsf{led}}
 =\neg\Bad_{\mathsf{fin}}\wedge
  \neg\Bad_{\mathsf{can}}\wedge
  \neg\Bad_{\mathsf{state}}.
\]
The three bad events in the experiment are that finalized prefixes cease to be
monotone, canonicalization ceases to be effect injective, or state execution
deviates from the specified transition relation after an event is selected.
Thus
\begin{equation}
 \Pr[\neg\mathsf{Good}_{\mathsf{led}}]
 \le \varepsilon_{\mathsf{fin}}
    +\varepsilon_{\mathsf{can}}
    +\varepsilon_{\mathsf{state}}.
 \label{eq:ledger-bad-union}
\end{equation}
All reductions stop at the first unrequested accepting transition relevant to
their output. Consequently, a corruption at a later game step neither
invalidates the ledger win nor has to be answered by the reduction.

\subsection{Signature embedding}

For clarity, the multi-user signature experiment used in
Theorem~\ref{thm:sig-embed} has classical signing and corruption interfaces. It
returns a verification key for every honest user, answers signing queries, and,
on corruption, returns that user's signing key and removes the user from the
set of permissible forgery targets.  A forgery is a valid signature on a
message never submitted to the corresponding signing oracle, under a user
uncorrupted when the forgery is returned.  This is the multi-user, adaptive
corruption formulation matching Definition~\ref{def:laeuf}.  If one starts
instead from a single-user definition without corruption, the usual target
user guess gives the corresponding factor in the number of initialized honest
accounts.

\begin{proof}[Full proof of Theorem~\ref{thm:sig-embed}]
Reduction \(\mathcal B\) runs \(\mathcal A\) and the ledger experiment. On \(\mathcal O_{\mathsf{init}}(a)\), it obtains a fresh verification key \(\mathsf{vk}_a\) from its multi-user challenger, forms the prescribed registration event, and installs that registration in the initial finalized prefix.  All account and ledger state other than signing keys is public or is sampled by \(\mathcal B\) exactly as in the real experiment. For an authorization request, the honest state machine first selects every authorization state output \(\rho\), including a successor credential, and forms
\[
 \tau=\Can(\mathsf{chainId},\mathsf{forkId},a,e,i,b,\rho).
\]
The reduction inserts \(\tau\) into \(\Requested[a]\), queries the signing oracle for user \(a\) on exactly \(\tau\), and exposes \((a,\tau,\sigma)\) to \(\mathcal A\).  Notice that the signing query and the update of \(\Requested[a]\) occur before exposure, as required by Definition~\ref{def:laeuf}.  The reduction answers a corruption query for \(a\) by forwarding it to the signature challenger and returning the signing key together with the locally maintained account and pending process state. It processes all adversarial events, scheduling decisions, receipts, and finalization commands by the public ledger transition relation, none of these steps requires a signing key.

Suppose \(\mathcal A\) causes the first accepting receipt
\[
 R^\star=\Acc(a^\star,e^\star,i^\star,\tau^\star,\ldots)
\]
that satisfies the winning condition of Definition~\ref{def:laeuf}.
Write \(\nu_{\mathsf{acc}}^\star=\nu_{\mathsf{acc}}(R^\star)\). Reduction
\(\mathcal B\) stops at the first unrequested accepting transition under an
uncorrupted honest account. On a ledger win, such a transition occurs no later
than \(\nu_{\mathsf{acc}}^\star\). Denote the account, action, and
authentication string at the stopping step by
\((a^\dagger,\tau^\dagger,\sigma^\dagger)\). On
\(\mathsf{Good}_{\mathsf{led}}\), sound state execution implies
\[
 \mathsf{Vf}(\mathsf{vk}_{a^\dagger},\tau^\dagger,
             \sigma^\dagger)=1.
\]
The target account is uncorrupted at the stopping step, so it is an eligible
target in the multi-user signature game. Every message sent to its signing
oracle was inserted into \(\Requested[a^\dagger]\). Since the accepted
\(\tau^\dagger\) is outside that set at the stopping step, it was
never signed. Effect injectivity rules out interpreting a signed canonical
action as the fresh transition. Thus
\((a^\dagger,\tau^\dagger,\sigma^\dagger)\) is a multi-user signature forgery.
We have established the event containment
\[
 \mathsf{Win}_{\LAEUF}\wedge\mathsf{Good}_{\mathsf{led}}
 \subseteq \mathsf{Win}_{\mathsf{mu\text{-}\PQEUF}}.
\]
Taking probabilities and applying~\eqref{eq:ledger-bad-union} gives exactly the bound in Theorem~\ref{thm:sig-embed}.  The simulation is straight line and all interfaces exposed to \(\mathcal A\) are classical, so the reduction also applies to a quantum adversary.
\end{proof}

\subsection{Single event extraction and converse}

We spell out the contextual one-time experiment used in Theorem~\ref{thm:extract}.  Key generation runs \(\Init\) and gives the
forger the public registration view.  After arbitrary public preprocessing that
depends on the key, the same process \(\mathcal G_{\mathsf{ctx}}\) used in the
ledger experiment returns \((\gamma,\mathcal T_0)\).  The forger makes one
classical signing query on an admissible canonical action \(\tau\), receives one
event \(\mathsf{ev}\), and wins by returning \((\tau',\mathsf{ev}')\) within
\(\mathfrak R\) such that
\[
 \tau'\ne\tau
 \quad\text{and}\quad
 V_{\mathsf{loc}}(\mathsf{pp},\mathsf{pub}(\mathcal T_0),
   \gamma,a,\tau',\mathsf{ev}')=1.
\]
The target signing state is never revealed.  Multi-user variants permit
corruption of instances other than the target and use the same first acceptance
rule as \(\LAEUF\).

\begin{proof}[Full proof of Theorem~\ref{thm:extract}]
First replace the contextual one-time experiment's public registration and
context view by the ledger distribution.  By (L4), this changes
\(\mathcal F\)'s success probability by at most
\(\varepsilon_{\mathsf{ctx}}\).  Statistical matching uses a coupling.
Computational matching uses the composed simulation as a distinguisher.
Work in the resulting matched game.  Ledger adversary \(\mathcal B\) creates
one honest account \(a\), obtains its fresh finalized registration snapshot
\(\mathcal T_0\), and runs the public context process. When \(\mathcal F\) asks
to sign \(\tau\), reduction \(\mathcal B\) computes
\(b=\mathsf{Req}(\mathsf{pub}(\mathcal T_0),\gamma,\tau)\) and invokes
\(\mathcal O_{\mathsf{auth}}(a,b)\).  By L1, the honest process fixes exactly
\(\tau\), which the ledger records in \(\Requested[a]\), and before reading any
extension of \(\mathcal T_0\) it emits
\[
 \mathsf{ev}=
 \Emit(\mathsf{st}_a,\mathcal T_0,\gamma,\tau).
\]
This is exactly the contextual signature returned to \(\mathcal F\).
Reduction \(\mathcal B\) exposes it, as the ledger game requires, and then censors it permanently.  By L2, the event is locally valid except with probability \(\varepsilon_{\mathsf{loc}}\). Suppose \(\mathcal F\) outputs a winning \((\tau',\mathsf{ev}')\).  The reduction has made no authorization request other than \(\tau\), never corrupts \(a\), and has kept the honest event out of the ledger.  Since \(\tau'\ne\tau\), condition (L3) applies.  Reduction \(\mathcal B\) runs the public censor and embed scheduler on \((\tau',\mathsf{ev}')\).  Except with probability \(\varepsilon_{\mathsf{emb}}\), it obtains a finalized extension \(\mathcal T'\) with
\[
 \Judge(\mathsf{pp},\mathcal T',a,\tau')=1.
\]
Here \(\varepsilon_{\mathsf{emb}}\) includes every finality or state execution failure in the promised finalized embedding.  The target remains uncorrupted and \(\tau'\notin\Requested[a]\), so the first such receipt is a \(\LAEUF\) win.

Let \(E_{\mathsf{loc}}\) and \(E_{\mathsf{emb}}\) denote the two failures in
the matched game.  Then
\[
 \mathsf{Win}_{\mathsf{cOTS}}
 \wedge\neg(E_{\mathsf{loc}}\vee E_{\mathsf{emb}})
 \subseteq \mathsf{Win}_{\LAEUF}.
\]
The union bound in the matched game, followed by the context game hop, yields
the theorem. The reduction invokes \(\mathcal F\) once,
in order, and never measures or rewinds its private workspace.  The only
additional work is generation and public validation of the same context plus
the scheduler promised by L3. It therefore preserves elapsed time, sequential
depth, and lifetime oracle limits in \(\mathfrak R\), up to the stated public
validation and \(\Delta_{\mathsf{emb}}\) overhead.  In particular, the argument
remains valid when \(\mathcal F\) retains quantum state and makes quantum random
oracle queries.
\end{proof}

\begin{proof}[Full proof of Proposition~\ref{prop:ots-converse}]
Let
\[
 \Sigma_1=(\mathsf{Kg}_1,\mathsf{Sign}_1,\mathsf{Vf}_1)
\]
be a contextual one-time signature. The initializer runs
\[
 (\mathsf{vk},\mathsf{sk})\leftarrow\mathsf{Kg}_1(1^\kappa),
\]
registers \(\mathsf{vk}\), and stores
\((\mathsf{sk},\mathtt{unused})\). For the current public snapshot and context,
authorization of \(\tau\) emits
\[
 \mathsf{ev}=(a,\gamma,\tau,
    \mathsf{Sign}_1(\mathsf{sk},\mathsf{pub}(\mathcal T_0),
                    \gamma,\tau))
\]
and changes the private flag to \(\mathtt{pending}\).  The state machine
refuses a second authorization under that key, even if the first event is censored.  The ledger accepts the event only when the registered key, context, nonce, epoch, expiry, and application guards match and
\[
 \mathsf{Vf}_1(\mathsf{vk},\mathsf{pub}(\mathcal T_0),
               \gamma,\tau,\sigma)=1.
\]
Acceptance consumes the one-time cell and creates the historical receipt used by \(\Judge\).

Property L1 is immediate because signing emits one event without reading a
later transcript.  Taking \(V_{\mathsf{loc}}=\mathsf{Vf}_1\), signature
correctness gives L2.  For L3, censor the honest event.  The public cell and its
one-time verification key then remain unused.  The proposition's premise on the
context window ensures that any locally valid fresh event remains admissible for the
stated embedding schedule.  It can therefore be included and finalized, after
which the receipt makes \(\Judge=1\).  Finality or state failure is absorbed in
\(\varepsilon_{\mathsf{emb}}\).  Using the identical context process gives
(L4) with zero distance.

For completeness, security is preserved as well.  On good ledger execution, a first receipt for a fresh canonical action contains a valid contextual one-time signature different from the at most one requested action for that cell.  Hence a multi-user ledger forger gives a multi-user contextual one-time forger with
\[
 \Adv_{\mathsf{LA}_{\Sigma_1}}^{\LAEUF}
 \le \Adv_{\Sigma_1}^{\mathsf{mu\text{-}cOTS}}
 +\varepsilon_{\mathsf{fin}}
 +\varepsilon_{\mathsf{can}}
 +\varepsilon_{\mathsf{state}}.
\]
If only a single user cOTS definition is assumed and at most \(q_{\mathsf{init}}\) one-time cells are registered, guessing the target cell gives the standard factor \(q_{\mathsf{init}}\) on the cOTS term. Adaptive corruption is handled exactly as in the signature embedding proof: the reduction outputs at the first unrequested accepting transition, before corruption of the target cell.
\end{proof}

\subsection{Admission after disclosure}

\begin{proof}[Full proof of Theorem~\ref{thm:closure}]
Fix an honest live cell \(u\).  Adversary \(\mathcal A_{\mathcal P}\) makes
exactly one authorization request for an action \(\tau\), so
\(\Requested[a]=\{\tau\}\) for the target cell.  It permits any honest precursor
needed by the protocol to be processed.  When the final honest event
\[
 r=(u,\tau,s,w)
\]
is emitted into the public pending pool, the adversary learns \(s\) and
censors \(r\) before inclusion.  In particular, the cell remains live.
It now runs \(\mathcal P\), which chooses a complete admissible canonical action
\(\tau'\ne\tau\), including all effects on the successor credential and
authorization state, and computes
\((c',w')\leftarrow\mathsf{Rebind}(u,s,\tau')\).  The adversary never asks the
authorization oracle for \(\tau'\).

Let \(E_{\mathsf{post}}\) be the event of Definition~\ref{def:post}.  By
definition,
\(\Pr[E_{\mathsf{post}}]=p_{\mathsf{post}}(\mathcal P)\).  On this event,
\(\mathcal P\) has produced a finalized extension in which this same \(c'\) is
eligible while \(u\) is live and has scheduled
\[
 r'=(u,\tau',s,w').
\]
The credential check succeeds because \(s\) is the credential emitted by the
honest process.  Except with probability \(\varepsilon_{\mathsf{rb}}\), the
rebinding guarantee gives
\(\mathsf{Open}(c',u,\tau',s,w')=1\).  Thus all acceptance predicates in
Definition~\ref{def:rebindable} hold.  Sound state execution accepts \(r'\),
executes \(\tau'\), and
creates a historical receipt. The adversary schedules that receipt to
finality.  No assumption of honest inclusion or liveness is used, all events whose
progress is needed are adversarially scheduled, while the honest reveal
remains censored. The adversary never corrupts the target account, so its
corruption step is infinite. At the acceptance step for \(r'\), the account is
honest and \(\tau'\notin\Requested[a]\); its receipt later finalizes. Hence this
receipt satisfies every condition of Definition~\ref{def:laeuf}.  Formally,
\[
 E_{\mathsf{post}}
 \wedge\neg E_{\mathsf{rb}}
 \wedge\neg\Bad_{\mathsf{fin}}
 \wedge\neg\Bad_{\mathsf{state}}
 \subseteq \mathsf{Win}_{\LAEUF},
\]
where \(\Pr[E_{\mathsf{rb}}]\le\varepsilon_{\mathsf{rb}}\).  Taking
probabilities and applying the union bound proves
\[
 \Adv_{\mathsf{LA}}^{\LAEUF}(\mathcal A_{\mathcal P})
 \ge p_{\mathsf{post}}(\mathcal P)-\varepsilon_{\mathsf{rb}}
      -\varepsilon_{\mathsf{fin}}-\varepsilon_{\mathsf{state}}.
\]
There is no canonicalization term because the attack chooses two distinct canonical actions, rather than relying on two encodings of one action.
\end{proof}

\section{Full proof of the finalized target lemma}
\label{app:cut-proof}

This appendix proves Lemma~\ref{lem:cut} and Corollary~\ref{cor:lifetime}.  The proof is self-contained as a reduction to two standard facts about compressed oracles, the transition capacity bound for one query and the output recording lemma of Zhandry and Chung et al.~\cite{zhandry2019record,chung2021compressed}.  We state those facts in the precise form used below.  We do not reproduce the construction and soundness proof of the compressed oracle itself. We first make explicit one condition implicit in the direct sum display in Section~\ref{sec:lqrom}.  The value $\omega$ may be hidden from the adversary at the cut and disclosed later, but it is sampled and stored in a classical environment register no later than the cut.  Thus, on every cut branch, the predicate
\[
  x\longmapsto \Fresh(z,\omega,x)
\]
is fixed from the cut onward.  The lemma does not cover a value $\omega$ that is first generated from oracle answers after the cut, nor a freshness predicate that changes with later ledger state.  In either case the target relation itself could change without an oracle query and the transition argument below would not apply.

\subsection{Purifying the ledger execution}
\label{app:purified-execution}

Fix an execution $\LQROM_{\mathcal E,\mathcal A}^{H}(1^\kappa)$ as in Definition~\ref{def:lqrom}.  Since the environment and adversary are polynomial-time, their registers and the execution horizon are finite for each $\kappa$.  We purify the execution as follows. Every randomized choice is generated coherently from a fresh coin register. Every CPTP map is replaced by a Stinespring isometry, retaining its environment register.  When a message, response, receipt, clock value, or finality certificate is declared classical, its value in the computational basis is copied to an inaccessible history register.  Tracing out that register has exactly the effect of the measurement or dephasing in the original execution.  Keeping it instead yields a global pure state whose different classical histories occupy orthogonal summands.  This purification does not give the adversary coherent ledger access, the adversary receives only the same dephased classical response as in Definition~\ref{def:lqrom}. An environment evaluation of $H$ on a classical input is implemented by one ordinary oracle call followed by a copy of its answer in the computational basis. This covers honest evaluations and evaluations made by the verifier.  We split every environment transition around these calls, so between two consecutive oracle calls the global isometry does not act on the oracle purification. Consequently, all direct, mediated, honest, and final evaluations occur in one ordered list.  This justifies charging them together in $\Qeff$.

\subsection{Compressed oracle representation}
\label{app:compressed-oracle}

Let $N=|\mathcal Y|=2^\lambda$.  Replace the uniformly random function oracle by its purified compressed oracle simulation \cite{zhandry2019record,chung2021compressed}.  Its database register has an orthonormal basis indexed by finite partial functions
\[
  D:\mathcal X\longrightarrow\mathcal Y\cup\{\bot\}.
\]
We write $\operatorname{dom}(D)=\{x:D(x)\ne\bot\}$.  The replacement is exact
on all registers visible to $\mathcal E$ and $\mathcal A$.  Only a
compressed oracle call acts on the database register.  In particular, local
adversarial computation and ledger processing that makes no oracle evaluation
cannot add an input and output pair to $D$.  We use the following two standard
consequences of the compressed oracle analysis.  They are stated for sequential
calls.  A classical oracle evaluation is a special case of such a call.

\begin{lemma}[Single query transition capacity]
\label{lem:app-transition}
Let $R$ be a predicate on compressed databases and let $\Pi_R$ be its projector.  Suppose that for every $D$ with $R(D)=0$ and every possible query input $x$, all range values whose assignment at $x$ can make $R$ true belong to a set $L_{D,x}\subseteq\mathcal Y$ of size at most $r$.  If membership in $R$ is witnessed by the newly present IO pair, then for one compressed oracle call $O_{\mathsf{co}}$,
\[
 \bigl\|\Pi_R O_{\mathsf{co}}(I-\Pi_R)\bigr\|_{\mathrm{op}}
 \le
 \sqrt{\frac{c_{\mathsf{co}}r}{N}}.
\]
The transition capacity normalization used in Chung et al.~\cite{chung2021compressed} permits $c_{\mathsf{co}}=10$.
\end{lemma}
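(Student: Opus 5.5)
The approach is to reduce the operator-norm bound to a computation on a single database cell, using the locality of the compressed oracle. Write one compressed oracle call as $O_{\mathsf{co}}=\mathsf{Dec}\circ O_{\mathsf{ph}}\circ\mathsf{Dec}$ in the form of Zhandry~\cite{zhandry2019record} and Chung et al.~\cite{chung2021compressed}. Here $O_{\mathsf{ph}}$ is the phase (or XOR) oracle acting on the cell $D(x)$. The map $\mathsf{Dec}$ is the local decompression unitary on the $(|\mathcal Y|+1)$-dimensional cell register $\mathrm{span}\{|\bot\rangle,|y\rangle:y\in\mathcal Y\rangle\}$; it exchanges $|\bot\rangle$ with the uniform superposition $|\hat 0\rangle$ and is the identity on the orthogonal complement.

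The key structural observation is that, controlled on the query input $x$, the call acts only on the cell $D(x)$ and on the adversary's response register. It leaves every other entry $D|_{\mathcal X\setminus\{x\}}$ untouched. I will therefore decompose the joint space into orthogonal blocks indexed by the classical data $(x,D|_{\mathcal X\setminus\{x\}})$. The unitary preserves each block. I will check that $\Pi_R$ is also block diagonal, because $R$ is a predicate on the full database. On a block whose restriction $D_{-x}$ does not satisfy $R$, the hypothesis (membership witnessed by the newly present pair) says that $R$ holds exactly when $D(x)\in L_{D,x}$. This includes the case $D(x)=\bot$ being non-$R$. So within the block, $\Pi_R$ becomes the projector $P_L$ onto $\mathrm{span}\{|y\rangle:y\in L\}$ with $|L|\le r$. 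Blocks where $D_{-x}$ already satisfies $R$ contribute nothing to $\Pi_R O(I-\Pi_R)$ unless the witness is removed and re-added, which the witness hypothesis rules out. Since the operator norm of a block-diagonal operator is the maximum over its blocks, it suffices to bound
\[
 \bigl\|P_L\,\mathsf{Dec}\,O_{\mathsf{ph},u}\,\mathsf{Dec}\,(I-P_L)\bigr\|_{\mathrm{op}}
\]
uniformly over $L$ with $|L|\le r$ and over response values $u$.

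For this local bound I will split the input space $(I-P_L)$ into the $|\bot\rangle$ direction and the directions $|y'\rangle$ with $y'\notin L$. For input $|\bot\rangle$, decompression gives $|\hat 0\rangle$, the phase maps it to $|\hat u\rangle$, and recompression leaves it as a Fourier vector. Its overlap with $P_L$ is therefore at most $\sqrt{r/N}$. For inputs in $\mathrm{span}\{|y'\rangle:y'\notin L\}$, the phase oracle is diagonal. Any leakage into $L$ comes only from the rank-one corrections introduced by $\mathsf{Dec}$ through its $|\hat 0\rangle$ component. Each such correction has norm $O(1/\sqrt N)$ after projection onto the $r$ coordinates of $L$, giving terms of size $O(\sqrt{r/N})$.

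Collecting the (at most three) rank-one contributions by the triangle inequality, or more carefully by bounding the squared norm, gives a bound of the form $\sqrt{c\,r/N}$. The constant can be tuned to $c_{\mathsf{co}}=10$, matching the normalization of Chung et al.

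The main obstacle I expect is the passage from per-basis-state amplitudes to a genuine operator norm. The block decomposition handles the different $(x,D_{-x})$. Within a block, however, the $|y'\rangle$ inputs for different $y'\notin L$ could add coherently, so summing per-vector bounds would lose a factor $\sqrt N$. To avoid this, I will first write $\mathsf{Dec}=I-2|\phi\rangle\langle\phi|$ with $|\phi\rangle=(|\bot\rangle-|\hat 0\rangle)/\sqrt2$, up to phase conventions. Then $P_L\,\mathsf{Dec}\,O\,\mathsf{Dec}\,(I-P_L)$ equals $P_L\,O\,(I-P_L)$, which is zero since $O$ is diagonal here, plus a sum of rank-one operators. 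The norm of each rank-one operator is a product of two vector norms, one of which is $\|P_L|\phi\rangle\|\le\sqrt{r/(2N)}$. This gives the bound directly.
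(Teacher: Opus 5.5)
Your argument is correct, but it takes a different route from the paper. The paper gives no derivation of this lemma: it presents it as a standard consequence of the compressed-oracle analysis, and it imports both the bound and the value $c_{\mathsf{co}}=10$ from the transition-capacity theorem of Chung et al.\ (the appendix says the constant is inherited ``rather than rederived here''). You instead reprove the local bound directly. Your block decomposition is sound because $O_{\mathsf{co}}=\sum_{x,u}|x,u\rangle\langle x,u|\otimes\mathsf{Dec}_xO_{x,u}\mathsf{Dec}_x$ and $\Pi_R$ are simultaneously block diagonal over $(x,u,D_{-x})$. For the paper's XOR-form oracle, first conjugate the response register by a Fourier transform, which commutes with $\Pi_R$. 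Writing $\mathsf{Dec}=I-2|\phi\rangle\langle\phi|$ with $\|P_L|\phi\rangle\|=\sqrt{|L|/(2N)}$ correctly avoids the coherent-summation loss you flagged.

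Your route buys a self-contained proof that isolates exactly what is used: locality of one call, diagonality of $\Pi_R$, and the fact that the witness hypothesis forces $R\equiv1$ on every block with $R(D_{-x})=1$. Otherwise, recompressing some $y$ to $\bot$ would create $R$ with no newly present pair; state it this way rather than as ``removed and re-added.'' This property is automatic for the monotone $\Hit$ relation. The citation buys brevity and Chung et al.'s more general $P\to P'$ framework.

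One bookkeeping caveat: the plain triangle inequality on your three rank-one terms gives $c=32$, or $18$ using $\langle\phi|O_u|\phi\rangle=\tfrac12$ for $u\neq0$, not $10$. To reach the cited value, note that $\mathsf{Dec}\,O_0\,\mathsf{Dec}=I$. For $u\neq0$, regroup as $\mathsf{Dec}\,O_u\,\mathsf{Dec}=O_u-2|\phi\rangle\langle\phi|(O_u-I)-2O_u|\phi\rangle\langle\phi|$. Since $\|(O_u^\dagger-I)|\phi\rangle\|=1$, this yields $\|P_L\,\mathsf{Dec}\,O_u\,\mathsf{Dec}\,(I-P_L)\|\le4\sqrt{r/(2N)}=\sqrt{8r/N}$. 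Because the lemma keeps $c_{\mathsf{co}}$ symbolic, this is a presentation issue rather than a gap.
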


\begin{lemma}[Output recording]
\label{lem:app-recording}
Suppose a classical validation process retains at most $\ell$ checked points
$x_1,\ldots,x_\ell$ before it stops. These may include attempts made through
the verifier and points in a final output. The verifier evaluates
$y_j=H(x_j)$ for every checked point, counts each evaluation in the oracle
budget, and copies $(x_j,y_j)$ into a classical check register.  Let
$|\Phi\rangle$ be the purified state after these checks.  Let $W$ be an event
whose occurrence implies that at least one retained pair $(x_j,y_j)$ satisfies
a fixed relation $R_{\mathsf{out}}$.  Conditioned on the orthogonal classical
check branches, let $\Pi_{R_{\mathsf{db}}}$ project onto compressed databases
that contain the exact retained pair $(x_j,y_j)$ for at least one satisfying
check.  Then
\[
 \sqrt{\Pr[W]}
 \le
 \bigl\|\Pi_{R_{\mathsf{db}}}|\Phi\rangle\bigr\|
 +\sqrt{\frac{2\ell}{N}}.
\]
The statement remains true when the points are chosen adaptively and when their
classical descriptions are correlated with the preceding execution.
\begin{proof}
Condition on an orthogonal classical validation branch and delete repeated
checked points. The standard output-recording inequality applies to the at
most $\ell$ remaining exact pairs. Its bound is uniform over the branch, so the
subnormalized branch vectors recombine in Euclidean norm. Enlarging the usual
residual to the displayed $\sqrt{2\ell/N}$ gives the claim.
\end{proof}
\end{lemma}

The qualifications concerning the exact pair and checks made afterward are essential.  The probability
that a newly evaluated $y_j$ lies in a target set of size $K$ is charged by
that final oracle evaluation through the transition bound.  Recording then
loses only $O(1/N)$ per exact retained pair.  Without charged final evaluations
and retained responses, a relation such as $H(x)\in C$ would instead incur a
residual proportional to $|C|/N$.

Lemma~\ref{lem:app-transition} is the strongly recognizable transition bound,
and Lemma~\ref{lem:app-recording} is the usual link between the compressed and
standard oracles.  Their constants depend on the convention used to normalize
the compressed oracle.  Keeping $c_{\mathsf{co}}$ symbolic makes the proof
independent of that convention.  The displayed value $10$ is inherited from
the cited transition capacity theorem rather than rederived here.

\subsection{The state at an adaptive cut}
\label{app:cut-state}

We implement the stopping time $\Theta$ by a classical cut flag $F$.  Before the flag is set, the environment tests at each slot whether the stopping condition has occurred.  Because $\Theta$ is an $\mathcal F_t$ stopping time, this test uses only the classical history through the current slot. When it first succeeds, the environment computes
\[
 C=g(\mathcal T^{\mathsf{fin}}_\Theta)
\]
and coherently copies the classical tuple $(\Theta,z,\omega,C)$ into a read-only cut register and its purifying history register.  The source registers are classical, so this copy is well defined. Closure means that every later acceptance-relevant target remains contained in this copied set. Let
\[
 b=(\theta,z,\omega,C)
\]
denote a cut branch.  Retaining the inaccessible history registers gives the orthogonal decomposition
\begin{equation}
 |\Psi_\Theta\rangle
 =\bigoplus_b |b\rangle|\psi_b\rangle,
 \label{eq:app-cut-direct-sum}
\end{equation}
where the vectors $|\psi_b\rangle$ are subnormalized conditional states at
their branch-specific stopping times. Adjoin a cemetery branch $b=\dagger$ for
executions in which the cut does not occur within the experiment horizon, set
$\Pi_\dagger=0$, and declare the winning event false on that branch. Then
$\sum_b\|\psi_b\|^2=1$. This stopped classical-quantum ensemble does not assert
independence between $C$ and the database. All correlation caused by oracle
queries before the cut, honest hash computations, adversarially planted
targets, scheduling, and finalization remains inside the branch states
$|\psi_b\rangle$. For a non-cemetery branch $b$, define
\[
 \Fresh_b(x)=\Fresh(z,\omega,x)
\]
and set $R_b(D)=1$ exactly when there exists an $x$ satisfying
\[
 x\in\operatorname{dom}(D),\qquad
 \Fresh_b(x)=1,\qquad D(x)\in C.
\]
Let $\Pi_b$ project onto the databases satisfying $R_b$, tensored with the identity on every register other than the database.  The global hit projector is
\begin{equation}
 \Pi_{\Hit}=\bigoplus_b |b\rangle\!\langle b|\otimes\Pi_b.
 \label{eq:app-hit-projector}
\end{equation}
It is a mathematical projector only.  It is not measured by the experiment.
By definition,
\begin{equation}
 \beta_{\mathsf{cut}}
 =\|\Pi_{\Hit}|\Psi_\Theta\rangle\|,
 \qquad
 \beta_{\mathsf{cut}}^2
 =\sum_b\|\Pi_b|\psi_b\rangle\|^2.
 \label{eq:app-beta}
\end{equation}
Thus $\beta_{\mathsf{cut}}^2$ is exactly the probability that an analytic measurement of the compressed database at the cut would find a fresh target opening.  No such measurement is performed. The direct sum is also why oracle dependence of $C$ causes no difficulty. On a fixed branch, $C$ is an ordinary immutable set.  The transition bound below is uniform over all $C$ of size at most $K$ and over all databases consistent with that branch.  Averaging therefore preserves arbitrary correlation created before the cut between $C$ and $D$.

\subsection{Transition bound for one call to the hit relation}
\label{app:one-call}

Fix a branch $b$ and a database $D$ for which $R_b(D)=0$.  For any possible query input $x$, define the local range set
\[
 L_{b,D,x}
 =\{y\in\mathcal Y:
     \Fresh_b(x)=1\ \wedge\ y\in C\}.
\]
If changing the database at the queried point can create $R_b$, the newly present pair must be $(x,y)$ with $y\in L_{b,D,x}$.  Moreover,
\[
 |L_{b,D,x}|\le |C|\le K.
\]
The new pair itself witnesses membership in $R_b$, so the predicate is strongly recognizable in the sense required by Lemma~\ref{lem:app-transition}.  Hence, with
\begin{equation}
 \alpha=\sqrt{\frac{c_{\mathsf{co}}K}{N}},
 \label{eq:app-alpha}
\end{equation}
every branch satisfies
\[
 \|\Pi_b O_{\mathsf{co}}(I-\Pi_b)\|_{\mathrm{op}}
 \le\alpha.
\]
Taking the direct sum over branches gives
\begin{equation}
 \|\Pi_{\Hit}O_{\mathsf{co}}(I-\Pi_{\Hit})\|_{\mathrm{op}}
 \le\alpha.
 \label{eq:app-global-transition}
\end{equation}

Equation~\eqref{eq:app-global-transition} remains valid for a call controlled by the branch.  On branches on which no evaluation occurs, the call is the identity and has zero transition capacity.  This observation handles a variable number of evaluations after the cut.

\subsection{Telescoping through the execution after the cut}
\label{app:telescoping}

List the oracle evaluations after the cut in execution order and pad every
branch to $Q$ positions using controlled identities. Allocate all purifying
ancillas at the start and extend each oracle-free isometry to a unitary on the
fixed register space. Let $V_0$ be the computation from the cut to the first
oracle position. For $1\le i<Q$, let $V_i$ be the computation after the $i$th
oracle call and before the $(i+1)$st call. Let $V_Q$ be the terminal
oracle-free computation after the final oracle call, including copying its
answer, completing validation, applying the stopping rule, and recording the
acceptance result. Halted branches are padded by controlled identities. None
of these maps modifies the frozen cut registers or the compressed database, so
\[
 [V_i,\Pi_{\Hit}]=0
 \qquad\text{for }0\le i\le Q.
\]

Set
\[
 |\Phi_0\rangle=|\Psi_\Theta\rangle,
 \qquad
 |\Phi_i\rangle=O_{\mathsf{co}}V_{i-1}|\Phi_{i-1}\rangle
 \quad(1\le i\le Q),
\]
and put
\[
 a_i=\|\Pi_{\Hit}|\Phi_i\rangle\|.
\]
Using the triangle inequality, unitarity of the compressed oracle call, and Equation~\eqref{eq:app-global-transition},
\begin{align*}
 a_i
 &=\bigl\|\Pi_{\Hit}O_{\mathsf{co}}V_{i-1}
       (\Pi_{\Hit}+I-\Pi_{\Hit})|\Phi_{i-1}\rangle\bigr\|\\
 &\le
   \bigl\|\Pi_{\Hit}O_{\mathsf{co}}\Pi_{\Hit}
       V_{i-1}|\Phi_{i-1}\rangle\bigr\|\\
 &\quad+
   \bigl\|\Pi_{\Hit}O_{\mathsf{co}}(I-\Pi_{\Hit})
       V_{i-1}|\Phi_{i-1}\rangle\bigr\|\\
 &\le a_{i-1}+\alpha.
\end{align*}
Here and below an isometry can be extended to a unitary without changing any norm, so the same calculation covers CPTP maps purified with a Stinespring dilation. Induction and Equation~\eqref{eq:app-beta} yield
\begin{equation}
 \|\Pi_{\Hit}|\Phi_Q\rangle\|
 \le \beta_{\mathsf{cut}}+Q\alpha.
 \label{eq:app-telescope}
\end{equation}
Define the state after all validation and output processing by
\[
 |\widetilde\Phi_Q\rangle=V_Q|\Phi_Q\rangle.
\]
Since $V_Q$ commutes with $\Pi_{\Hit}$,
\[
 \|\Pi_{\Hit}|\widetilde\Phi_Q\rangle\|
 =\|\Pi_{\Hit}|\Phi_Q\rangle\|.
\]

This step explains the lifetime query accounting.  Neither a classical ledger
call nor finalization projects away the hit component.  Finality only creates a
frozen classical branch label.  The adversary's remaining quantum workspace
and the compressed database continue unchanged.

\subsection{From a recorded hit to a winning execution}
\label{app:output}

Let $W$ be the winning event in Lemma~\ref{lem:cut}.  On a cut branch $b$, the
retained validation history up to the first winning acceptance supplies at
least one of the at most $\ell$ checked points $x_j$ for which
\[
 \Fresh_b(x_j)=1,
 \qquad H(x_j)\in C.
\]
Each check evaluation is one of the $Q$ calls after the cut and its classical answer
$y_j=H(x_j)$ is retained.  If the exact retained pair is present in the final
compressed database, then on a winning check branch that database satisfies
$R_b$.  Lemma~\ref{lem:app-recording}, followed by
Equation~\eqref{eq:app-telescope}, therefore gives
\begin{align*}
 \sqrt{\Pr[W]}
 &\le
 \|\Pi_{\Hit}|\widetilde\Phi_Q\rangle\|
 +\sqrt{\frac{2\ell}{N}}\\
 &\le
 \beta_{\mathsf{cut}}
 +Q\sqrt{\frac{c_{\mathsf{co}}K}{N}}
 +\sqrt{\frac{2\ell}{N}}.
\end{align*}
Substituting $N=2^\lambda$ proves the first display of Lemma~\ref{lem:cut}.  Finally, for nonnegative $a,b,c$, $(a+b+c)^2\le3(a^2+b^2+c^2)$.  Squaring the last inequality gives
\[
 \Pr[W]
 \le
 3\beta_{\mathsf{cut}}^2
 +\frac{3c_{\mathsf{co}}KQ^2}{2^\lambda}
 +\frac{6\ell}{2^\lambda},
\]
which is the second display.

\subsection{Why adaptive stopping does not add a loss}
\label{app:adaptive-stopping}

For completeness, we isolate where adaptivity enters the preceding proof. The stopping time can depend on every public event before the cut, including validation outcomes that depend on the oracle and adversarial scheduling.  This only changes the amplitudes and internal states in Equation~\eqref{eq:app-cut-direct-sum}.  Once a branch is fixed, its $(z,\omega,C)$ are fixed as well, and the bound for one call is uniform over the branch.  The global projector and every map after the cut are block diagonal in the retained classical history.  Their relevant operator norm is therefore the maximum of the branchwise norms, which is at most $\alpha$. Equivalently, one may condition on each classical cut branch, apply the same bound, and combine the resulting subnormalized vectors in Euclidean norm. No union bound over possible cut times is needed.  The stopping time requirement is essential, a cut selected retrospectively using a future winning output would amount to postselection and is not covered.

\subsection{Lifetime accounting}
\label{app:lifetime-proof}

We now prove Corollary~\ref{cor:lifetime}.  Index every possible protected episode by $j$.  For an adaptively absent episode, take $K_j=0$ and let its winning event be empty.  For episode $j$, let $W_j$ be the winning event that the first fresh hit attributed to that episode is accepted, let $\beta_j$ be its bad norm at the cut, and suppose
\[
 \beta_j^2\le\varepsilon_{\mathsf{pre},j}.
\]
The total number of oracle evaluations after that episode's cut is at most the conservative lifetime bound $Q_{\mathsf{life}}$.  Applying Lemma~\ref{lem:cut} with $K=K_j$, $Q=Q_{\mathsf{life}}$, and $\ell=\ell_j$ gives
\[
 \Pr[W_j]
 \le
 3\varepsilon_{\mathsf{pre},j}
 +\frac{3c_{\mathsf{co}}K_jQ_{\mathsf{life}}^2}{2^\lambda}
 +\frac{6\ell_j}{2^\lambda}.
\]
Every fresh lifetime hit lies in $\bigcup_j W_j$.  A union bound therefore gives
\begin{align*}
 \Pr[\mathsf{win}]
 &\le \sum_j\Pr[W_j]\\
 &\le
 3\sum_j\varepsilon_{\mathsf{pre},j}
 +\frac{3c_{\mathsf{co}}Q_{\mathsf{life}}^2}{2^\lambda}
      \sum_j K_j\\
 &\quad +\frac{6}{2^\lambda}\sum_j\ell_j\\
 &=
 3\varepsilon_{\mathsf{pre}}^{\mathsf{life}}
 +\frac{3c_{\mathsf{co}}K_{\mathsf{life}}
 Q_{\mathsf{life}}^2}{2^\lambda}\\
 &\quad +\frac{6\ell_{\mathsf{life}}}{2^\lambda},
\end{align*}
as claimed. If the realized number or size of episodes is random, the display
uses public deterministic upper bounds on $\sum_jK_j$ and $\sum_j\ell_j$. We do
not condition on the complete future episode schedule, since that could
postselect on later oracle-dependent events. An expectation-based refinement
would require a separate predictable variable-capacity lemma and is not claimed
here.

\subsection{Scope of the lemma}
\label{app:cut-scope}

The proof uses all of the following conditions.
\begin{enumerate}[leftmargin=1.6em]
  \item The cut is a stopping time, rather than a time selected using a
  future success event. Every winning branch reaches the cut before its
  witnessing checks, and a branch that never reaches the cut cannot win.
  \item The target register $C$ and the relation parameters $(z,\omega)$ are
  classical and fixed at the cut. They may be correlated with all oracle
  activity before the cut, and $\omega$ may remain secret until later, but
  $\omega$ is sampled and stored no later than the cut.
  \item The predicate
  \[
   \Fresh(z,\omega,\cdot)
  \]
  is deterministic, oracle-independent, and fixed after the cut. Every later
  acceptance-relevant eligible target remains in $C$.
  \item Every oracle evaluation after the cut capable of affecting the relation,
  including calls made by the verifier and retained validation or output checks,
  is included in
  $Q$.
  \item A concrete construction separately proves a bound on
  $\beta_{\mathsf{cut}}^2$.  The finalized target lemma does not make that
  contribution from before the cut negligible by itself.
\end{enumerate}
Dropping any of Items 1--4 invalidates the transition argument.  Item 5 is the reason the main theorem carries a term specific to the protocol for the contribution before the cut rather than claiming that closure alone turns every hash commitment into an object secure against preimage attacks.

\section{Full proofs for commit, close, reveal}
\label{app:ccr-proofs}

\subsection{Marked slice exposure}
\label{app:ccr-exposure}

\begin{proof}[Full proof of Lemma~\ref{lem:head-exposure}]
Fix a cell $u$ and reduce to unstructured search for its uniform secret. The
search challenger provides coherent access to
\[
 E_u(s)=\mathbf 1[s=s_u].
\]
The reduction simulates every other cell locally. For the target cell, sample
$h_u\leftarrow\bits^{\lambda_h}$, independent random oracles $G_0,G_1$, and
an independent random function $P_u$ from canonical secret deleted opening
descriptions to $\bits^{\lambda_c}$. Write
\[
 q_u(s)=\mathsf{Enc}(\mathtt{head};\ctx_u,s).
\]
Every well-formed $H_1$ input for the cell is uniquely
$x=\mathsf{Enc}_u(\varphi,s)$. Define the simulated interfaces from the outset
by
\begin{align*}
 H_0(q_u(s))
  &=\begin{cases}h_u,&E_u(s)=1,\\G_0(q_u(s)),&E_u(s)=0,
    \end{cases}\\
 H_1(\mathsf{Enc}_u(\varphi,s))
  &=\begin{cases}P_u(\varphi),&E_u(s)=1,\\
      G_1(\mathsf{Enc}_u(\varphi,s)),&E_u(s)=0.
    \end{cases}
\end{align*}
Malformed inputs and inputs for other cells use the corresponding independent
random oracle.

For fixed $s_u$, the truth table map from $(P_u,G_1)$ to $H_1$ and the unused
marked values of $G_1$ is a basis permutation. The marked slice of $H_1$ is
$P_u$, and every unmarked point is the corresponding $G_1$ value. The
analogous statement holds for $(h_u,G_0)$ and $H_0$. Thus the visible
interfaces are exactly independent random oracles. The routing is implemented
coherently by computing $E_u(s)$, selecting the appropriate oracle, and
uncomputing the selector. Every prescribed honest commitment is evaluated
through the same $P_u$ oracle as a routed adversarial evaluation, so consistency
holds even if an earlier adversarial query overlaps that input. No adaptive
programming or database deletion occurs.

Compare this routed execution with the identical circuit in which every call
to $E_u$ is replaced by the all zero oracle. Prescribed honest evaluations
remain calls to $P_u$ in both circuits. In the all zero execution, no other
candidate carrying $H_1$ call reaches $P_u$. Consequently, every coordinate of
$D_{P,u}$ was queried by a prescribed honest evaluation, whose action belongs
to $\mathcal R_u$, and
\[
 \Pi_u^P|\Psi_{\Theta_u}^{\,0}\rangle=0.
\]
Therefore
\[
 \widehat\beta_u
 \le
 \bigl\|
 |\Psi_{\Theta_u}^{\,s_u}\rangle-
 |\Psi_{\Theta_u}^{\,0}\rangle
 \bigr\|.
\]
If the routed simulation makes at most $T$ equality oracle calls, the standard
BBBV hybrid bound~\cite{bennett1997strengths} gives
\[
 \mathbb E_{s_u}
 \bigl\|
 |\Psi_{\Theta_u}^{\,s_u}\rangle-
 |\Psi_{\Theta_u}^{\,0}\rangle
 \bigr\|^2
 \le \frac{4T^2}{2^{\lambda_s}}.
\]
Computing and uncomputing the routing bit uses at most two equality calls per
charged candidate-carrying evaluation, so $T\le2R$. The close may be adaptive.
Retain its classical flag and pad every branch to the public cap before applying
the hybrid bound. Thus, for a universal constant $c_{\mathsf{eq}}$,
\[
 \sum_u\widehat\beta_u^2
 \le c_{\mathsf{eq}}
 \frac{N(R+1)^2}{2^{\lambda_s}}.
\]
This proves~\eqref{eq:head-exposure-eq}.

For the post-cut exact-secret argument, define
\[
 \Fresh^P_{u,z}(\varphi)
 =\Fresh(z,(s_u,\mathcal R_u),\mathsf{Enc}_u(\varphi,s_u))
\]
and let $\Hit^P_{u,z,C}(D_{P,u})=1$ exactly when some
$\varphi\in\operatorname{dom}(D_{P,u})$ satisfies
$\Fresh^P_{u,z}(\varphi)=1$ and $D_{P,u}(\varphi)\in C$. Equality routing, the
off-slice $G_1$ call, and every oracle-free map act trivially on $D_{P,u}$.
Only the controlled $P_u$ call can create this relation. Since at most
$|C|\le K$ range values can create it, the compressed-oracle transition bound
for each charged logical $H_1$ evaluation is
\[
 \bigl\|\Pi_{\Hit^P}O_{H_1}^{\mathsf{route}}
 (I-\Pi_{\Hit^P})\bigr\|_{\mathsf{op}}
 \le \sqrt{\frac{c_{\mathsf{co}}K}{2^{\lambda_c}}}.
\]
A prescribed honest $P_u$ evaluation has zero transition capacity because its
action is not fresh. On an exact secret final check, the verifier retains
$(\varphi,P_u(\varphi))$, so the output-recording inequality applies with
residual $\sqrt{2\ell/2^{\lambda_c}}$. The finalized target projector also
requires $P_u(\varphi)\in C_u$ and is therefore dominated by $\Pi_u^P$.

For alternatives, conditioned on $(s_u,h_u)$, every
$H_0(q_u(s))$ with $s\ne s_u$ is independent and uniform in
$\bits^{\lambda_h}$. The event $\mathsf{AltHead}_R$ requires a retained or
output candidate to be checked and found to match. Quadratic search for a
random target, including the final check, gives the second bound after a union
bound over cells. Revealing $s_u$ identifies only the excluded planted point
and does not change the distribution outside it.
\end{proof}

\subsection{Completeness under bounded contention}

Recall that \(\delta_{\mathsf{slot}}\in\{0,1\}\) is the rounding needed to move
from \(d_i\) to the first slot strictly after \(d_i\).  In continuous time
notation, or when the protocol treats the boundary as the start of the next
slot, \(\delta_{\mathsf{slot}}=0\).

\begin{proof}[Full proof of Proposition~\ref{prop:ccr-complete}]
The request arrives while \(u=(a,e,i)\) is the live cell and the signer emits its commitment in slot \(t_{\mathsf{req}}\).  Under \((\Delta_{\mathsf{inc}},\Delta_{\mathsf{fin}})\) liveness, it is included by \(t_{\mathsf{req}}+\Delta_{\mathsf{inc}}\) and belongs to a finalized prefix by
\[
 t_C\le t_{\mathsf{req}}+\Delta_{\mathsf{inc}}
                       +\Delta_{\mathsf{fin}}<d_i.
\]
The proposition assumes
\(t_i^{\mathsf{open}}<t_{\mathsf{req}}\), and an event cannot finalize before
it is emitted, so
\(t_i^{\mathsf{open}}<t_{\mathsf{req}}\le t_C\).  By hypothesis, fewer than
$M$ distinct candidates precede $c_i$ in the canonical admission order.  Hence
the cap does not exclude it and the honest digest \(c_i\) belongs to \(C_u\)
before the close.  Once this fact is finalized, monotone finality preserves it.
The signer does not expose \(s_i\) until both this admission fact is final and the
close has passed.  At the first admissible slot
\(t_{\mathsf{rev}}\le d_i+\delta_{\mathsf{slot}}\), it emits the reveal.  That
event satisfies all four transition checks.  The cell and canonical guards are
current by the guard-stability premise. Its inclusion is after \(d_i\).
Construction gives
\(H_0(\mathsf{Enc}(\mathtt{head};\ctx_u,s_i))=h_i\). Recomputing under \(H_1\) returns
\(c_i\in C_u\).  Liveness therefore includes and finalizes the reveal by
\[
 d_i+\delta_{\mathsf{slot}}
       +\Delta_{\mathsf{inc}}+\Delta_{\mathsf{fin}}.
\]
Serialization ensures that no second honestly requested action can consume the
cell. If the requested receipt is not the first receipt for the cell, a
different action has produced a fresh accepting receipt first. This proves the
stated dichotomy. The probability bound follows by adding liveness, finality,
state-execution, and $\LAEUF$ failure. Filling all $M$ positions can deny
service without violating authentication safety, which is why the admission
rank premise is necessary.
\end{proof}

\subsection{Security games and lifetime accounting}

The exposure argument before the cut charges the joint sequence of oracle
queries that can carry candidates.  An \(H_0\) query can concentrate amplitude on a secret consistent
with a public head, after which one \(H_1\) query can carry that candidate.
Conversely, an \(H_1\) query can carry a blind candidate tested only later.
Recall
\begin{equation}
 \Qhead:=Q_0+Q_1.
 \label{eq:q-head}
\end{equation}
The counts are deterministic lifetime caps and include direct, mediated,
relevant honest, and evaluations used to check output.  Their sum is a conservative
budget for candidate carrying queries in Lemma~\ref{lem:head-exposure}.  We also recall
\begin{equation}
 \Delta_{\mathsf{PRF}}
 :=\sum_{a=1}^{U_{\mathsf{life}}}
       \Adv_F^{\qPRFQRO}(\mathcal B_a).
 \label{eq:multi-prf-loss}
\end{equation}
Under the single-key $\qPRFQRO$ convention,
\[\Delta_{\mathsf{PRF}}\le
U_{\mathsf{life}}\max_a\Adv_F^{\qPRFQRO}(\mathcal B_a).\]
Each reduction stops if its target account is corrupted, so it never has to
reveal a challenge key after replacing its outputs.

In the remainder, \(N_a,K_a,\ell_a\) are the public deterministic
quotas for each account fixed in the theorem, and the lifetime quantities are
their sums.  This allocation is what permits numerical bounds proved in
different hybrids for target accounts to be added.

\begin{proof}[Full proof of Theorem~\ref{thm:ccr}]
We use the following games.

\paragraph{Game \(G_0\).}
This is the real \(\LAEUF\) experiment for \(\CR\) in the \(\LQROM\).

\paragraph{Game \(G_1\).}
Abort on
\(\Bad_{\mathsf{fin}}\vee\Bad_{\mathsf{can}}
\vee\Bad_{\mathsf{state}}\).  Equation~\eqref{eq:ledger-bad-union} gives
\begin{equation}
 \Pr[G_0\text{ wins}]
 \le \Pr[G_1\text{ wins}]
 +\varepsilon_{\mathsf{fin}}+
  \varepsilon_{\mathsf{can}}+
  \varepsilon_{\mathsf{state}}.
 \label{eq:ccr-g01}
\end{equation}

\paragraph{Game \(G_2\).}
Let \(E_{\mathsf{acc}}\) be the event that the transition machine accepts an
action outside the current requested set for an honest account uncorrupted at
that step. Every win in $G_1$ implies $E_{\mathsf{acc}}$. Let \(W_a\) be the
event that the first such transition is under account \(a\). For each \(a\),
define \(G_2^{(a)}\) by replacing
\(F_{k_a}(\mathsf{Enc}(\mathtt{secret};\ctx_u))\) for cells of \(a\), up to corruption of
\(a\), by independent uniform \(\lambda_s\) bit strings.  Reduction
\(\mathcal B_a\) answers derivation calls for \(a\) with its $\qPRFQRO$ challenge
oracle, simulates other accounts with locally generated real keys, and stops at
the first of corruption of \(a\) or resolution of \(W_a\). On \(W_a\), its
defining transition necessarily occurs before corruption. Define
\[
 p_{2,\mathsf{acc}}
 :=\sum_a\Pr_{G_2^{(a)}}[W_a].
\]
The events \(W_a\) partition $E_{\mathsf{acc}}$, so summing the distinguishing
gaps gives
\begin{equation}
 \Pr[G_1\text{ wins}]
 \le \Pr_{G_1}[E_{\mathsf{acc}}]
 \le p_{2,\mathsf{acc}}+\Delta_{\mathsf{PRF}}.
 \label{eq:ccr-g12}
\end{equation}
This handles arbitrary corruptions of other users and arbitrary corruption
times.  It also avoids the invalid step of returning a PRF challenge key after
having replaced its outputs by a random function. All estimates below are
performed in the unconditioned experiment $G_2^{(a)}$. The event $W_a$ only
restricts the event whose probability is bounded; the state and random oracles
are never conditioned on $W_a$. We bound
$\Pr_{G_2^{(a)}}[W_a]$ using only protected cells and checks of account $a$,
and then sum over $a$. The deterministic quotas
\(N_a,K_a,\ell_a\) sum to the displayed lifetime totals even though the hybrids
are distinct probability spaces. Fix the selected accepting transition and write
\(u=(a,e,i)\) for its cell.  Its complete canonical action \(\tau^\star\)
includes the successor head and is outside \(\Requested[a]\) at its acceptance
step. Because state execution is sound, \(u\) was live immediately
before the reveal and the accepted reveal supplies \((s^\star,r^\star)\) such
that
\begin{align}
 H_0\bigl(\mathsf{Enc}(\mathtt{head};\ctx_u,s^\star)\bigr)&=h_i,
 \label{eq:ccr-forge-head}\\
 H_1(x^\star)&\in C_u,
 \label{eq:ccr-forge-commit}
\end{align}
where
\[
 x^\star=
 \mathsf{Enc}(\mathtt{commit};
 \ctx_u,d_i,\tau^\star,s^\star,r^\star).
\]

First split the selected acceptance according to whether \(s^\star=s_u\). If
\(s^\star\ne s_u\), the accepted reveal is an alternative preimage of the
planted public head.  Apply Lemma~\ref{lem:head-exposure} to the at most \(N_a\)
cells of account \(a\) in each target hybrid, then sum over \(a\).  Since
\(\sum_aN_a=N_{\mathsf{life}}\), this gives
\begin{equation}
 p_{2,\mathsf{alt}}
 :=\sum_a\Pr_{G_2^{(a)}}[W_a\wedge s^\star\ne s_u]
 \le c_{\mathsf{alt}}
 \frac{N_{\mathsf{life}}(\Qhead+1)^2}
      {2^{\lambda_h}}.
 \label{eq:ccr-alt-bound}
\end{equation}

It remains to analyze selected acceptances with \(s^\star=s_u\). Within target hybrid \(a\),
for every protected cell \(u\) of account \(a\), take the stopping time
\(\Theta_u=d_i\). At that time the finalized
transcript deterministically copies \(C_u\) to a read-only classical register.
Eligibility is defined to be membership in this set after \(d_i\), so every
later acceptance-relevant target remains in \(C_u\). This is a closed finalized cut in the
sense of Definition~\ref{def:cut}.  The random \(s_u\) was sampled and stored
in a classical honest state register before this cut, although it is not
disclosed until later. The set $\mathcal R_u$ is also frozen in classical game
state at the cut. Use $\omega=(s_u,\mathcal R_u)$ as the hidden parameter of
Lemma~\ref{lem:cut}. The pre-close authorization gate ensures that every
prescribed honest commitment action for this cell belongs to $\mathcal R_u$.

Define \(\Fresh_u\) at the cut to require a well formed opening for \(u\), secret
coordinate exactly \(s_u\), and a complete action outside the frozen value of
\(\Requested[a]\). These conditions remain fixed thereafter. A winning action
is outside \(\Requested[a]\) at its acceptance step, and hence also at the
earlier cut.
Omitting later guards on the current state only enlarges the analytic fresh relation.
Let \(\beta_u\) be the bad norm at the cut for the marked-slice $P_u$
database and this fresh relation for the exact secret. If its hit projector is
nonzero, that database already contains a designated fresh description on the
hidden \(s_u\)-slice. No
honest process reveals \(s_u\) before \(d_i\).
Equation~\eqref{eq:head-exposure-eq}, first with \(N_a\) in each target hybrid
and then summed over \(a\), therefore gives
\begin{equation}
 \sum_u\beta_u^2
 \le \sum_u\widehat\beta_u^2
 \le c_{\mathsf{eq}}
 \frac{N_{\mathsf{life}}(\Qhead+1)^2}{2^{\lambda_s}}.
 \label{eq:ccr-pre-bound}
\end{equation}
This covers both a blind \(H_1\) carrier query and a search assisted by \(H_0\)
followed by an \(H_1\) carrier query.  Using \(Q_1\) alone here would be
invalid. No measurement of the compressed database is made. On the
complementary amplitude, the adversary retains its entire quantum state from
before the cut and may subsequently learn \(s_i\) from a censored honest
reveal. Nevertheless, closure prevents any new digest from becoming eligible
outside \(C_u\).
Equations~\eqref{eq:ccr-forge-head}--\eqref{eq:ccr-forge-commit} show that a
selected exact-secret acceptance must supply a checked fresh input whose \(H_1\)
image lies in one of the immutable lifetime target sets.  Apply
Corollary~\ref{cor:lifetime} inside each target hybrid with \(K_a,\ell_a\) and
the lifetime budget \(Q_1\), then sum over \(a\).  We obtain
\begin{align}
 p_{2,\mathsf{acc}}
 \le{}&p_{2,\mathsf{alt}}\\
 &+3\sum_u\beta_u^2
 +\frac{3c_{\mathsf{co}}K_{\mathsf{life}}Q_1^2}
        {2^{\lambda_c}}
 +\frac{6\ell_{\mathsf{life}}}{2^{\lambda_c}}.
 \label{eq:ccr-cut-bound}
\end{align}
The use of a lifetime budget permits arbitrary interleaving of cells and
retention of quantum state across every close.  Every mediated opening check up
to the first fresh acceptance is included in \(Q_1\), and its retained pair is
included in \(\ell_{\mathsf{life}}\).

Combining~\eqref{eq:ccr-g01}, \eqref{eq:ccr-g12},
\eqref{eq:ccr-alt-bound}, \eqref{eq:ccr-pre-bound}, and
\eqref{eq:ccr-cut-bound}, and upper bounding both head search denominators by
\(2^{\min\{\lambda_s,\lambda_h\}}\), yields
\begin{align*}
 \Adv_{\CR}^{\LAEUF}(\mathcal A)
 \le{}&\Delta_{\mathsf{PRF}}
 +\varepsilon_{\mathsf{fin}}
 +\varepsilon_{\mathsf{can}}
 +\varepsilon_{\mathsf{state}}\\
 &+(c_{\mathsf{alt}}+3c_{\mathsf{eq}})
 \frac{N_{\mathsf{life}}(\Qhead+1)^2}
      {2^{\min\{\lambda_s,\lambda_h\}}}\\
 &+3c_{\mathsf{co}}
 \frac{K_{\mathsf{life}}Q_1^2}{2^{\lambda_c}}
 +\frac{6\ell_{\mathsf{life}}}{2^{\lambda_c}}.
\end{align*}
Thus one may take \(c_h=c_{\mathsf{alt}}+3c_{\mathsf{eq}}\) and
\(c_c=3c_{\mathsf{co}}\).  The final term for checked output has the explicit
coefficient six. Under the multi-key $\qPRFQRO$ convention
following~\eqref{eq:multi-prf-loss} and the convention for candidate carrying queries
in Equation~\eqref{eq:q-head}, this is the displayed bound of
Theorem~\ref{thm:ccr}.
\end{proof}

\end{document}